\title{Squares of $3$-sun-free split graphs} 
\author[1]{Van Bang Le}
\author[2]{Andrea Oversberg}
\author[2]{Oliver Schaudt}
\affil[1]{\small Institut f\"ur Informatik, Universit\"at Rostock, Rostock, Germany\\
\texttt{le@informatik.uni-rostock.de}}
\affil[2]{\small Institut f\"ur Informatik, Universit\"at zu K\"oln, K\"oln, Germany\\ 
\texttt{$\{$oversberg,schaudt$\}$@zpr.uni-koeln.de}}
\begin{document}

\newtheorem{theorem}{Theorem}
\newtheorem{lemma}{Lemma}
\newtheorem{proposition}{Proposition}
\newtheorem{observation}{Observation}
\newtheorem{corollary}{Corollary}
\newtheorem{claim}{Claim}
\newtheorem{definition}{Definition}
\newtheorem{conjecture}{Conjecture}
\newtheorem{question}{Question}
\newtheorem*{question*}{Question}

\newcommand{\dist}{\mbox{dist}}

\newcommand{\sswp}{\textsc{square of split graph with property $\mathcal{P}$}} 
\newcommand{\srwp}{\textsc{split square root with property $\mathcal{P}$}}

\newcommand{\C}{\mathcal C} 

\maketitle

\begin{abstract}
\noindent
The square of a graph $G$, denoted by $G^2$, is obtained from $G$ by putting an edge between two distinct vertices whenever their distance is two. Then $G$ is called a square root of $G^2$. 
Deciding whether a given graph has a square root is known to be NP-complete, even if the root is required to be a split graph, that is, a graph in which the vertex set can be partitioned into a stable set and a clique.
 
We give a wide range of polynomial time solvable cases for the problem of recognizing if a given graph is the square of some special kind of split graph.
To the best of our knowledge, our result properly contains all previously known such cases.
Our polynomial time algorithms are build on a structural investigation of graphs that admit a split square root that is 3-sun-free, and may pave the way toward a dichotomy theorem for recognizing squares of (3-sun-free) split graphs.

\medskip
\noindent{\textbf{Keywords:}} Square of graphs, square of split graphs.\\

\noindent{\textbf{2010 MSC:}} 05C75, 05C85.
\end{abstract}

\section{Introduction}
The \emph{$k$-th power} of a graph $G$, written $G^k$, is obtained from $G$ by adding new edges between any two different vertices at distance at most $k$ in $G$. In case $k=2$, $G^2$ is also called the \emph{square} of $G$, and $G$ is called the \emph{square root} of $G^2$.

While every graph has a square, not every graph admits a square root.
In fact, it is NP-complete to decide whether a given graph has a square root, as was shown by Motwani and Sudan~\cite{MS94}. 
Later, Lau and Corneil~\cite{CL04} proved that this decision problem remains NP-complete when restricted to split graph square roots. 

\begin{theorem}[Lau and Corneil~\cite{CL04}]\label{CL04}
It is NP-complete to decide if a given graph is the square of some split graph.
\end{theorem}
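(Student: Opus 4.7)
The plan is to verify the two parts of NP-completeness separately. Membership in NP is straightforward: a witness is a split graph $G$ on $V(H)$ together with a partition of $V(G)$ into a clique and a stable set (such a partition exists iff $G$ is split and can in any case be found in polynomial time, e.g., by testing for the forbidden induced subgraphs $2K_2$, $C_4$, $C_5$). One then computes $G^2$ and checks equality with $H$ in polynomial time.

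For NP-hardness, I would give a polynomial-time reduction from a suitable NP-hard source problem, most naturally 3-SAT (or, if convenient, a constrained variant such as 1-in-3-SAT or a graph-theoretic problem like 3-coloring). Given a formula $\varphi$ with variables $x_1,\dots,x_n$ and clauses $C_1,\dots,C_m$, the reduction produces a graph $H$ built from three kinds of pieces: (a) variable gadgets whose admissible split square roots are in bijection with the two truth values of $x_i$, encoded by which of two ``twin'' vertices is placed into the clique side of the split partition; (b) clause gadgets that admit a local completion to a split square root only when at least one of the three incident literal vertices has been set true; (c) connecting structure that propagates the truth value of $x_i$ to every occurrence. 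Correctness then consists of the two routine directions: a satisfying assignment produces a split graph $G$ with $G^2=H$, and conversely any split $G$ with $G^2=H$ yields a satisfying assignment by reading off the clique/stable-set partition on the variable gadgets.

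The heart of the difficulty lies in gadget design, and this is where I expect the main obstacle. The gadgets must simultaneously satisfy two conflicting demands: they must be rigid enough that the only square roots of $H$ are the intended ones (so that no ``rogue'' root can cheat the logic of $\varphi$), yet the intended roots must themselves be split, i.e., contain no induced $2K_2$, $C_4$, or $C_5$. The standard lever for rigidity in square-root reductions is to attach large pendant cliques, true twins, or private neighbors that force certain adjacencies in any preimage of $H$; one must verify that these enforcement gadgets do not introduce a forbidden induced subgraph in the intended root. Once the gadgets are pinned down, the polynomial size of the reduction and the equivalence between satisfying assignments and split square roots follow by a case analysis over the local structure of each gadget, and NP-hardness is established.
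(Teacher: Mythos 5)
The statement you are asked to prove is not proved in this paper at all: it is an imported result, quoted from Lau and Corneil, and the authors use it only as motivation. So there is no ``paper proof'' to compare against; your proposal has to stand on its own as a proof of NP-completeness.

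It does not. The NP-membership argument is fine (and essentially trivial), but the hardness half is a plan rather than a proof. You describe the generic architecture of a 3-SAT reduction --- variable gadgets whose split square roots encode truth values, clause gadgets that check satisfaction, connectors --- and then you explicitly defer the one thing that constitutes the actual mathematical content: the construction of the gadgets and the verification that (i) the intended roots are split, i.e.\ free of induced $2K_2$, $C_4$ and $C_5$, and (ii) no unintended square root of $H$ exists that bypasses the logic of $\varphi$. You correctly identify this tension yourself (``this is where I expect the main obstacle''), but identifying where the difficulty lies is not the same as overcoming it. Without concrete gadgets, neither direction of the correctness claim can be checked: one cannot verify that a satisfying assignment yields a split $G$ with $G^2=H$, nor that every split square root decodes to a satisfying assignment. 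Note also that the rigidity devices you invoke (large pendant cliques, true twins, private neighbors) are exactly the kind of structure that tends to create induced $2K_2$'s in a would-be split root, so the claim that these two demands can be reconciled is far from routine --- it is the theorem. As it stands, the proposal establishes nothing beyond membership in NP; the hardness claim is unsupported.
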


The study of split square roots is particularly interesting due to its close connection to the set representation problem, as was observed by Lau and Corneil~\cite{CL04} and further exploited by Milani\v{c} and Schaudt~\cite{MS13}.
It should be noted that deciding if a given graph is the $k$-th power of some split graph is trivial for $k\ge 3$, as only graphs in which every component is a clique are $k$-th powers, $k\ge 3$, of split graphs.

In view of the hardness result in Theorem~\ref{CL04}, the following problems are of interest:

\medskip\noindent
\srwp\\[.7ex]
\begin{tabular}{l l}
{\em Instance:}& A graph $G$.\\
{\em Question:}& Does there exist a split graph $H\in\mathcal{P}$ such that $G=H^2$\,?\\[.7ex]
\end{tabular}

\medskip\noindent
\sswp\\[.7ex]
\begin{tabular}{l l}
{\em Instance:}& A graph $G\in\mathcal{P}$.\\
{\em Question:}& Does there exist a split graph $H$ such that $G=H^2$\,?\\[.7ex]
\end{tabular}

\medskip
It is worth mentioning that, for ${\cal P'}\subsetneq \cal P$, a polynomial time algorithm solving \sswp\ would imply that \textsc{square of split graphs with property $\cal P'$} is polynomially solvable, too. But, however, this is not longer true for the other problem: In general, a polynomial time algorithm solving \srwp\ does not imply that \textsc{split square root with property $\cal P'$} is also polynomially solvable.

In order to describe previous results for the two above mentioned problems, we need some notion and definitions. 

\subsection{Definitions and notion}

All considered graphs are finite and simple.
Let $G$ be a graph and $v \in V(G)$.
By $N_G(v)$ we denote the set of neighbors of $v$ in $G$.
The \emph{closed neighborhood} of $v$ in $G$, that is $N_G(v) \cup \{v\}$, we denote by $N_G[v]$. 
A \emph{universal vertex} of $G$ is one that is adjacent to every other vertex of $G$. 
A \emph{clique}, respectively, an \emph{independent set}, in $G$ is a set of pairwise adjacent, respectively, non-adjacent vertices, in $G$. 
For a subset $X \subseteq V(G)$, we denote by $G[X]$ the subgraph induced by $X$.
If two graphs $G$ and $H$ are isomorphic, we may simply write $G \cong H$.

For any graph $H$ we say that $G$ is \emph{$H$-free} if $G$ does not contain an induced subgraph that is isomorphic to $H$; $G$ is \emph{$(H_1,\ldots,H_t)$-free} if $G$ is $H_i$-free for all $1\le i\le t$. A graph class is said to be \emph{hereditary} if whenever a graph belongs to the class then all of its induced subgraphs also belong to the class.

For an integer $\ell\ge 3$, $C_\ell$ denotes the cycle on $\ell$ vertices and $\ell$ edges. 
A graph is \emph{chordal} if it is $C_\ell$-free for all $\ell\ge 4$. 
A chordal graph is {\em strongly chordal} if it does not contain any $\ell$-sun as an induced  subgraph; here, an \emph{$\ell$-sun}, denoted by $S_\ell$, consists of a stable set $\{u_1, u_2, \ldots, u_\ell\}$ and a clique $\{v_1, v_2, \ldots, v_\ell\}$ such that for $i \in \{1,\ldots, \ell\}$, $u_i$ is adjacent to exactly $v_i$ and $v_{i+1}$ (index arithmetic modulo $\ell$). An \emph{odd sun} is an $S_\ell$ with odd $\ell$.

Given two graphs $G$ and $H$, the \emph{join} $G\oplus H$ is obtained from the disjoint union of $G$ and $H$ by adding all possible edges between vertices in $G$ and vertices in $H$. 

A \emph{split graph} is a graph whose vertex set can be partitioned into a clique and an independent set. It is well known that split graphs are exactly the chordal graphs without induced $2K_2$ (the complement of the $4$-cycle $C_4$); $K_n$ stands for a complete graph with $n$ vertices. 

For a graph $G$, $\mathcal{C}(G)$ denotes the set of all inclusion-maximal cliques of $G$. $G$ is said to be \emph{clique-Helly} if $\mathcal{C}(G)$ has the Helly property. 
$G$ is \emph{hereditary clique-Helly} if every induced subgraph of $G$ is  
clique-Helly. (See~\cite{DPS09} for more information on clique-Helly graphs.) 
Prisner~\cite{Prisner93} characterized hereditary clique-Helly graphs as follows; see Fig.~\ref{fig:G1234} for the graphs $G_1, \ldots, G_4$.

\begin{theorem}[Prisner \cite{Prisner93}]\label{clique-Helly}
A graph $G$ is hereditary clique-Helly if and only if $G$ is $(G_1, G_2, G_3, G_4)$-free. 
\end{theorem}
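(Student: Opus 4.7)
The plan is to rely on Szwarcfiter's well-known reformulation of the clique-Helly property in terms of extended triangles. Recall that for a triangle $T=\{a,b,c\}$ of $G$, the \emph{extended triangle} $T^\ast$ is the subgraph of $G$ induced by $T$ together with every vertex adjacent to at least two vertices of $T$. Szwarcfiter's theorem states that $G$ is clique-Helly if and only if every extended triangle $T^\ast$ of $G$ contains a vertex adjacent to all other vertices of $T^\ast$ (a universal vertex of $T^\ast$). I would take this characterization as the starting point; it reduces Prisner's global statement to a purely local condition on triangles.

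For the forward implication I would verify, for each of the four forbidden graphs $G_1,\ldots,G_4$ of Fig.~\ref{fig:G1234}, that it fails the clique-Helly property by exhibiting three pairwise intersecting maximal cliques whose total intersection is empty (equivalently, exhibiting a triangle whose extended triangle is all of $G_i$ and contains no universal vertex). Since ``hereditary clique-Helly'' is hereditary by definition, any graph containing one of the $G_i$ as induced subgraph is not hereditary clique-Helly.

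For the reverse implication, suppose $G$ is not hereditary clique-Helly. Then some induced subgraph $H$ of $G$ is not clique-Helly, so by Szwarcfiter's theorem $H$ has an extended triangle $T^\ast$ on a triangle $T=\{a,b,c\}$ with no universal vertex. In particular each of $a,b,c$ has a non-neighbour in $T^\ast$: there exist $x_a,x_b,x_c\in T^\ast$ with $x_a\not\sim a$, $x_b\not\sim b$, $x_c\not\sim c$. The definition of $T^\ast$ forces $x_a$ to be adjacent to exactly $b,c$ among $T$, $x_b$ to be adjacent to exactly $a,c$, and $x_c$ to exactly $a,b$. These three vertices are pairwise distinct, since e.g.\ $x_a\sim b$ while $x_b\not\sim b$. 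The induced subgraph on $\{a,b,c,x_a,x_b,x_c\}$ therefore contains the triangle $T$, the three ``opposite'' non-edges $ax_a, bx_b, cx_c$, and the six forced edges between $T$ and $\{x_a,x_b,x_c\}$; the only freedom lies in the adjacencies among $x_a, x_b, x_c$. This gives $2^3=8$ possibilities, which under the symmetric action permuting the three antipodal pairs collapse to four isomorphism types, according to whether $G[\{x_a,x_b,x_c\}]$ has $0,1,2$, or $3$ edges. I would finish by matching these four types with the graphs $G_1,G_2,G_3,G_4$ shown in Fig.~\ref{fig:G1234}.

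I expect the main obstacle to be the bookkeeping in the last step: verifying that the four isomorphism types arising from the case analysis coincide \emph{exactly} with the four graphs drawn in Fig.~\ref{fig:G1234}, and checking that the subgraph on $\{a,b,c,x_a,x_b,x_c\}$ is indeed \emph{induced} in $H$ (so that no extra edges hide behind the extended-triangle definition). The rest of the argument is a clean application of Szwarcfiter's characterization together with the observation that the three ``missing'' neighbours of $a,b,c$ can always be chosen distinct.
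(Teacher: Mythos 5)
This statement is quoted in the paper as a known result of Prisner; the paper gives no proof of it, so there is nothing internal to compare against. Your argument is a correct and standard route to the theorem, but it is worth noting what it relies on and what Prisner's own argument does instead. You reduce everything to Szwarcfiter's extended-triangle characterization of clique-Helly graphs; granting that theorem, your reverse direction is sound: in an extended triangle $T^\ast$ with no universal vertex, each of $a,b,c$ has a non-neighbour $x_a,x_b,x_c$ in $T^\ast$, the definition of $T^\ast$ forces $x_a\sim b,c$, $x_b\sim a,c$, $x_c\sim a,b$, the three vertices are distinct, and the only remaining freedom is the (arbitrary) adjacency among $\{x_a,x_b,x_c\}$, whose $0$, $1$, $2$ or $3$ edges give exactly the four graphs of Fig.~\ref{fig:G1234} up to the $S_3$-symmetry permuting the antipodal pairs -- your worry about the six-vertex subgraph being induced is unfounded, since you are simply taking $H[\{a,b,c,x_a,x_b,x_c\}]$ and all edge patterns not already forced are covered by the four cases. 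The forward direction does require you to actually exhibit, for each $G_i$, three pairwise intersecting maximal cliques with empty total intersection (the three ``ear'' triangles work in all four graphs), which is routine. The trade-off of your approach is that it imports Szwarcfiter's theorem as a black box -- a nontrivial result that, strictly speaking, postdates Prisner's -- whereas Prisner's original proof is self-contained, arguing directly on a minimal pairwise-intersecting family of maximal cliques with empty intersection and extracting the same six-vertex configuration from it. Either way the theorem stands; your version is shorter at the cost of a stronger imported tool.
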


\begin{figure}[ht] 
\begin{center}
\psset{unit=0.8}
\begin{pspicture}(0,0)(11,1.9)

\cnode(0,1.7){0.1cm}{a_1}
\cnode(0.5,0.85){0.1cm}{a_2}
\cnode(1,0){0.1cm}{a_3}
\cnode(1,1.7){0.1cm}{a_4}
\cnode(1.5,0.85){0.1cm}{a_5}
\cnode(2,1.7){0.1cm}{a_6}

\ncarc[arcangle=0]{-}{a_1}{a_2}
\ncarc[arcangle=0]{-}{a_1}{a_4}
\ncarc[arcangle=0]{-}{a_2}{a_4}
\ncarc[arcangle=0]{-}{a_2}{a_3}
\ncarc[arcangle=0]{-}{a_2}{a_5}
\ncarc[arcangle=0]{-}{a_3}{a_5}
\ncarc[arcangle=0]{-}{a_4}{a_5}
\ncarc[arcangle=0]{-}{a_4}{a_6}
\ncarc[arcangle=0]{-}{a_5}{a_6}

\cnode(3,1.7){0.1cm}{b_1}
\cnode(3.5,0.85){0.1cm}{b_2}
\cnode(4,0){0.1cm}{b_3}
\cnode(4,1.7){0.1cm}{b_4}
\cnode(4.5,0.85){0.1cm}{b_5}
\cnode(5,1.7){0.1cm}{b_6}

\ncarc[arcangle=0]{-}{b_1}{b_2}
\ncarc[arcangle=-40]{-}{b_1}{b_3}
\ncarc[arcangle=0]{-}{b_1}{b_4}
\ncarc[arcangle=0]{-}{b_2}{b_4}
\ncarc[arcangle=0]{-}{b_2}{b_3}
\ncarc[arcangle=0]{-}{b_2}{b_5}
\ncarc[arcangle=0]{-}{b_3}{b_5}
\ncarc[arcangle=0]{-}{b_4}{b_5}
\ncarc[arcangle=0]{-}{b_4}{b_6}
\ncarc[arcangle=0]{-}{b_5}{b_6}

\cnode(6,1.7){0.1cm}{c_1}
\cnode(6.5,0.85){0.1cm}{c_2}
\cnode(7,0){0.1cm}{c_3}
\cnode(7,1.7){0.1cm}{c_4}
\cnode(7.5,0.85){0.1cm}{c_5}
\cnode(8,1.7){0.1cm}{c_6}

\ncarc[arcangle=0]{-}{c_1}{c_2}
\ncarc[arcangle=-40]{-}{c_1}{c_3}
\ncarc[arcangle=0]{-}{c_1}{c_4}
\ncarc[arcangle=0]{-}{c_2}{c_4}
\ncarc[arcangle=0]{-}{c_2}{c_3}
\ncarc[arcangle=0]{-}{c_2}{c_5}
\ncarc[arcangle=0]{-}{c_3}{c_5}
\ncarc[arcangle=-40]{-}{c_3}{c_6}
\ncarc[arcangle=0]{-}{c_4}{c_5}
\ncarc[arcangle=0]{-}{c_4}{c_6}
\ncarc[arcangle=0]{-}{c_5}{c_6}

\cnode(9,1.7){0.1cm}{d_1}
\cnode(9.5,0.85){0.1cm}{d_2}
\cnode(10,0){0.1cm}{d_3}
\cnode(10,1.7){0.1cm}{d_4}
\cnode(10.5,0.85){0.1cm}{d_5}
\cnode(11,1.7){0.1cm}{d_6}

\ncarc[arcangle=0]{-}{d_1}{d_2}
\ncarc[arcangle=-40]{-}{d_1}{d_3}
\ncarc[arcangle=0]{-}{d_1}{d_4}
\ncarc[arcangle=40]{-}{d_1}{d_6}
\ncarc[arcangle=0]{-}{d_2}{d_4}
\ncarc[arcangle=0]{-}{d_2}{d_3}
\ncarc[arcangle=0]{-}{d_2}{d_5}
\ncarc[arcangle=0]{-}{d_3}{d_5}
\ncarc[arcangle=-40]{-}{d_3}{d_6}
\ncarc[arcangle=0]{-}{d_4}{d_5}
\ncarc[arcangle=0]{-}{d_4}{d_6}
\ncarc[arcangle=0]{-}{d_5}{d_6}

\end{pspicture}
\end{center}
\caption{$G_1$, $G_2$, $G_3$, and $G_4$.}
\label{fig:G1234}
\end{figure}
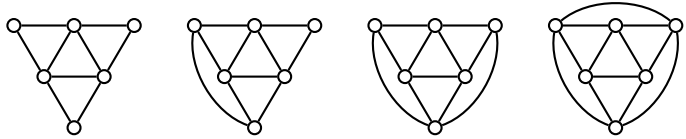

It follows that a split graph is hereditary clique-Helly if and only if it is $3$-sun-free. 
For more information on graph classes, their definitions and properties we refer to the books~\cite{BLS99,Spinrad} and the online resource~\cite{isgci}.

Finally, as the square of a graph is the union of the squares of its components, we may assume that all graphs considered in this paper are connected.

\subsection{Previous results}\label{sec:state-of-art}


For split graphs $H=(V(H), E(H))$ we write $H=(C\cup I, E(H))$, meaning $V(H)=C\cup I$ is a partition of the vertex set of $H$ into a clique $C$ and an independent set $I$.
The following facts, Propositions~\ref{prop:1} and \ref{prop:2}, are proved implicitly in \cite{LT10a,LT11,LOS14}. We write $\bigcap \mathcal{C}(G)$ shortly for $\bigcap_{Q \in \mathcal C(G)} Q$.

\begin{proposition}\label{prop:1}
If $\big|\bigcap \mathcal{C}(G) \big|\ge |\mathcal{C}(G)|$, then $G=H^2$ for some split graph $H$. Moreover, this split square root $H=(C\cup I, E(H))$ can be constructed as follows: 
\begin{itemize}
 \item Write $\mathcal{C}(G)=\{Q_1,\ldots, Q_q\}$ and set $C=\bigcap \mathcal{C}(G)$, say $C=\{c_1,c_2,\ldots, c_q$, $\ldots\}$, and 
$I=V(G)-C$.
 \item There is an edge $xc_i$ in $H$ between $x\in I$ and $c_i\in C$ if and only if $x\in Q_i$ in $G$.
\end{itemize}
\end{proposition}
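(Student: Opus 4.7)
The plan is to verify directly that the constructed graph $H$ is a split graph and that $H^2 = G$, by a straightforward case analysis on whether endpoints lie in $C$ or in $I$. The hypothesis $|\bigcap \mathcal{C}(G)| \ge |\mathcal{C}(G)|$ is used precisely to pick distinct representatives $c_1, \ldots, c_q \in C$, one per maximal clique $Q_i$; the remaining vertices of $C$ (if any) are placed in the clique part with no neighbors in $I$. Since $V(H) = C \cup I$ is by construction a clique-independent-set partition, $H$ is a split graph, so the content is showing $E(H^2) = E(G)$.

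For the equality $H^2 = G$ I would check the three cases. First, if $x, y \in C$ then $xy \in E(H) \subseteq E(H^2)$; and since $C \subseteq Q_1$, both lie in a maximal clique of $G$, so $xy \in E(G)$. Second, if $x \in C$ and $y \in I$, and $xy \in E(G)$, then $x, y$ lie in some $Q_i$, whence $yc_i \in E(H)$ and $xc_i \in E(H)$ (or $x = c_i$ and $xy \in E(H)$ directly), giving $\dist_H(x,y) \le 2$; conversely any path of length at most two in $H$ from $x$ to $y$ must pass through some $c_i \in C$ (an interior vertex cannot be in $I$, because $y \in I$ has no neighbor in $I$), forcing $y \in Q_i$ and hence $xy \in E(G)$ since $x \in C \subseteq Q_i$. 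Third, if $x, y \in I$ are adjacent in $G$, they share some $Q_i$, so $x - c_i - y$ is a path of length two in $H$; and conversely any $H$-path of length at most two between two non-adjacent $I$-vertices must use a common neighbor in $C$, necessarily some $c_i$, placing both in $Q_i$.

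The only substantive point requiring care is the converse direction in the mixed and $I$-$I$ cases: one must be sure that common neighbors in $H$ do actually come from the indexed set $\{c_1, \ldots, c_q\}$, not from $C \setminus \{c_1, \ldots, c_q\}$ (which have no edges into $I$) nor from $I$ (which is stable). Both exclusions are immediate from the construction, so no obstacle arises. The argument is therefore a clean verification, and the essential idea is that the condition $|\bigcap \mathcal C(G)| \ge |\mathcal C(G)|$ gives enough "room" inside the common intersection to host a distinct clique-representative $c_i$ for each $Q_i$, which is exactly what is needed to realize every edge of $G$ as a path of length at most two in $H$.
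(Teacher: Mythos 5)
Your verification is correct and complete: the three-case analysis on $C$/$I$ membership, together with the observations that $I$ is independent in $H$ and that vertices of $C\setminus\{c_1,\dots,c_q\}$ have no neighbours in $I$, is exactly the direct argument this proposition calls for. The paper itself omits the proof (citing it as implicit in earlier work), so there is no divergence to report; your write-up simply supplies the standard verification of the stated construction.
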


We note that the sufficient condition in Proposition~\ref{prop:1} is not necessary. Indeed, the join of the line graph $L(K_n)$ and $K_n$, $G=L(K_n)\oplus K_n$, is the square of a split graph, but $\big|\bigcap \mathcal{C}(G)\big|=n < |\mathcal{C}(G)|={n \choose 3} + n$.

\begin{proposition}\label{prop:2}
Let $H=(C\cup I, E(H))$ be a split graph without induced $3$-sun. Then $Q$ is 
a maximal clique in $H^2$ if and only if $Q=N_H[v]$ for some vertex 
$v\in C$ with inclusion-maximal $N_H[v]$. In particular, if $G=H^2$ for some $3$-sun-free split graph $H$, then $G$ has at most $|V(G)|$ maximal cliques and $\big|\bigcap \mathcal{C}(G) \big|\ge |\mathcal{C}(G)|$.
\end{proposition}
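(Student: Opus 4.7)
The plan is to isolate a Helly-type statement about closed neighborhoods in $H$, apply it in both directions of the equivalence, and then deduce the numerical consequences.

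The key auxiliary lemma is that for any clique $Q$ in $H^2$ with $Q \cap I \neq \emptyset$, the open neighborhoods $\{N_H(y) : y \in Q \cap I\}$ share a common element of $C$. Pairwise intersection is immediate, since distinct $y, y' \in Q \cap I$ are non-adjacent in $H$ yet at distance at most two, so they share a neighbor, which must lie in $C$. To upgrade from pairwise to global intersection, I would take a minimal subfamily $\{y_1, \ldots, y_k\}$ with empty joint intersection; minimality forces $k \ge 3$ and, for each $i$, supplies a witness $v_i \in C$ adjacent in $H$ to every $y_j$ with $j \neq i$ but not to $y_i$. After checking that the $v_i$ are pairwise distinct, the induced subgraph on $\{y_1, y_2, y_3\} \cup \{v_1, v_2, v_3\}$ has a triangle on the $v_i$'s, an independent triple on the $y_j$'s, and precisely the edges $v_i y_j$ with $i \neq j$ (for $i, j \in \{1,2,3\}$), which is exactly an induced $3$-sun, contradicting the hypothesis on $H$.

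With the lemma in hand, the forward direction is short: given a maximal clique $Q$ of $H^2$, the lemma yields $v \in C$ adjacent in $H$ to every $y \in Q \cap I$, and since any vertex of $C$ automatically dominates $Q \cap C$, we obtain $Q \subseteq N_H[v]$; because $N_H[v]$ is itself a clique in $H^2$ (all its vertices sit at distance one from $v$ in $H$), maximality of $Q$ gives $Q = N_H[v]$, and inclusion-maximality of $N_H[v]$ among $\{N_H[v'] : v' \in C\}$ follows, since any strict superset would strictly extend $Q$. The degenerate case $Q \cap I = \emptyset$ is handled by picking $v \in Q \cap C$ directly. For the converse, if some $u \notin N_H[v]$ extends $N_H[v]$ to a clique in $H^2$, then $u \in I$ (because $C \subseteq N_H[v]$); applying the lemma to $N_H[v] \cup \{u\}$ produces $v' \in C$ adjacent to $u$ and to every $I$-vertex of $N_H[v]$, and a routine check yields $N_H[v] \subsetneq N_H[v']$, contradicting inclusion-maximality. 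The numerical consequence is then immediate: every maximal clique of $G = H^2$ has the form $N_H[v]$ with $v \in C$, so $|\mathcal{C}(G)| \leq |C| \leq |V(G)|$, and each such clique contains all of $C$, yielding $\bigcap \mathcal{C}(G) \supseteq C$ and hence $|\bigcap \mathcal{C}(G)| \geq |C| \geq |\mathcal{C}(G)|$.

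The main obstacle I anticipate is the Helly lemma: the step from pairwise to global intersection is precisely where the $3$-sun-free hypothesis has to be used, and the delicate part is verifying both the pairwise distinctness of the $v_i$ and the absence of spurious edges in the chosen six-vertex subgraph, so that a genuine \emph{induced} $3$-sun appears rather than merely a graph containing all its edges.
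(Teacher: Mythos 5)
Your proof is correct. The paper itself does not prove Proposition~\ref{prop:2} --- it only cites it as implicit in \cite{LT10a,LT11,LOS14} --- but your argument is the standard one behind those references and fits the paper's framing exactly: the step from pairwise to global intersection of the neighborhoods $N_H(y)$, $y\in Q\cap I$, is precisely the Helly property of $3$-sun-free split graphs (cf.\ Theorem~\ref{clique-Helly}), and your extraction of the induced $3$-sun from a minimal non-Helly subfamily, including the distinctness of the $v_i$ and the verification that no spurious edges occur, is sound. The only point worth making explicit is that the case $k\le 2$ in the minimal-subfamily argument (and the claim that every maximal clique of $H^2$ meets $C$ or is handled by some $v\in C$) uses the standing assumption that $H$ is connected, so that no vertex of $I$ is isolated; without that assumption an isolated $I$-vertex would be a maximal clique of $H^2$ not of the form $N_H[v]$ with $v\in C$.
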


Based on Propositions~\ref{prop:1} and \ref{prop:2}, the following polynomially solvable cases of \srwp\ have been obtained.

\begin{theorem}[Le and Tuy~\cite{LT10a,LT11}]\label{LT11Char}
A graph $G$ is the square of a strongly chordal split graph if and only if $G$ is strongly chordal and $\big|\bigcap \mathcal{C}(G) \big|\ge |\mathcal{C}(G)|$.
\end{theorem}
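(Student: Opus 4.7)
The plan is to prove the two directions of the biconditional separately, using Propositions~\ref{prop:1} and~\ref{prop:2} together with one classical fact about strongly chordal graphs.

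For the forward direction, I would assume $G = H^2$ for a strongly chordal split graph $H$. Since $H$ is in particular $3$-sun-free, Proposition~\ref{prop:2} immediately yields the cardinality inequality $\big|\bigcap \mathcal{C}(G)\big| \ge |\mathcal{C}(G)|$. For strong chordality of $G$ itself I would invoke the classical theorem that every power of a strongly chordal graph is again strongly chordal (Lubiw, Raychaudhuri).

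For the backward direction, I would start from the split square root $H = (C \cup I, E(H))$ furnished by Proposition~\ref{prop:1}, where $C = \bigcap \mathcal{C}(G)$. Since $H$ is split it is automatically chordal, so the task reduces to showing $H$ contains no induced $\ell$-sun for any $\ell \ge 3$. Assuming one exists, with stable set $\{u_1,\ldots,u_\ell\}$ and clique $\{v_1,\ldots,v_\ell\}$, a short case analysis (using that $C$ is a clique and $I$ an independent set) forces $\{u_i\} \subseteq I$ and each $v_i = c_{k_i} \in C$ for pairwise distinct $k_i \le q$ --- any extra $c_j$ with $j > q$ has no $I$-neighbors in $H$ and so cannot serve as a sun vertex. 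The adjacency rule of Proposition~\ref{prop:1} then translates the sun conditions into $u_i \in Q_{k_i} \cap Q_{k_{i+1}}$ and $u_i \notin Q_{k_j}$ for $j \notin \{i, i+1\} \pmod{\ell}$.

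The key step, and in my view the main obstacle, is to read these relations as an induced $2\ell$-cycle $u_1\,Q_{k_1}\,u_\ell\,Q_{k_\ell}\,\cdots\,u_2\,Q_{k_2}\,u_1$ in the bipartite incidence graph between $V(G)$ and $\mathcal{C}(G)$ --- equivalently, the clique--vertex incidence matrix of $G$ contains the incidence matrix of $C_{2\ell}$ as a submatrix. I would then invoke the Farber/Anstee--Farber characterization: the clique--vertex incidence matrix of a strongly chordal graph is totally balanced, hence forbids precisely the $C_{2k}$-incidence submatrices with $k \ge 3$, contradicting the assumed strong chordality of $G$. A more hands-on alternative would try to lift the sun from $H$ directly to an induced $\ell$-sun inside $G$ by choosing, for each $Q_{k_i}$, a sufficiently `private' witness vertex; however, arguing that such witnesses exist and together form a stable set in $G$ is essentially what the totally balanced machinery secures behind the scenes, so routing through that characterization appears cleanest.
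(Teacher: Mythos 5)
Your proof is correct. Note that the paper does not actually prove this theorem --- it is quoted from Le and Tuy --- but your argument is precisely the one the paper's framework suggests (``Based on Propositions~\ref{prop:1} and~\ref{prop:2}, the following \dots have been obtained''), and your backward direction is structurally identical to the paper's own proof of Theorem~\ref{thm:odd-sun-free-characterization}, with ``totally balanced / all suns'' in place of ``balanced / odd suns'': the sun in the root of Proposition~\ref{prop:1} must have its independent part in $I$ and its clique part among the $c_{k_i}$ with $k_i\le q$, and this yields an induced $C_{2\ell}$ in the vertex--to--maximal-clique incidence graph of $G$, which total balancedness forbids. The forward direction via Proposition~\ref{prop:2} together with the Lubiw--Raychaudhuri--Dahlhaus--Duchet theorem that powers of strongly chordal graphs are strongly chordal is also fine. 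The only point to make explicit is which form of Farber's characterization you invoke: you need that the \emph{clique} matrix (not merely the closed-neighborhood matrix) of a strongly chordal graph is totally balanced; this is standard, and only that one direction of the equivalence is required.
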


This theorem has the following algorithmic implication.
\begin{theorem}[Le and Tuy~\cite{LT10a,LT11}]\label{LT11Algo}
Given an $n$-vertex and $m$-edge graph $G$, recognizing if $G$ is the square of some 
strongly chordal split graph $H$ can be done in time $\mathcal O(\min\{n^2,m\log n\})$, and if so, such a square root $H$ for $G$ can be constructed in the same time.
\end{theorem}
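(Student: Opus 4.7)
The plan is to reduce the algorithmic problem to the structural characterization in Theorem~\ref{LT11Char}: it suffices to (i) test whether $G$ is strongly chordal, (ii) enumerate $\mathcal{C}(G)$ and compute $\bigcap\mathcal{C}(G)$, (iii) verify the inequality $\big|\bigcap\mathcal{C}(G)\big|\ge|\mathcal{C}(G)|$, and (iv) if all tests pass, output the split square root $H=(C\cup I,E(H))$ prescribed explicitly by Proposition~\ref{prop:1}. Correctness is immediate from Theorem~\ref{LT11Char} together with the construction in Proposition~\ref{prop:1}, so the entire content of the statement is an efficiency claim.

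For step~(i), I would invoke the known strongly chordal recognition algorithms: Spinrad's $\mathcal{O}(n^2)$ procedure and, independently, Paige and Tarjan's refinement-based algorithm running in $\mathcal{O}(m\log n)$, which together yield the desired $\mathcal{O}(\min\{n^2,m\log n\})$ bound. If $G$ fails this test we answer \emph{no} immediately. Otherwise, in step~(ii), I would use the fact that every chordal graph on $n$ vertices has at most $n$ maximal cliques and that $\mathcal{C}(G)$ can be listed in $\mathcal{O}(n+m)$ time via a clique tree obtained from a perfect elimination ordering (which the recognition algorithm already produces). Computing the intersection of these cliques is then straightforward within the same bound; in particular $|\mathcal{C}(G)|\le n$, so the arithmetic test in step~(iii) is trivial.

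For step~(iv), once $C:=\bigcap\mathcal{C}(G)$ and $\mathcal{C}(G)=\{Q_1,\ldots,Q_q\}$ are in hand, one picks $q$ vertices $c_1,\ldots,c_q\in C$, declares $C$ a clique of $H$, sets $I:=V(G)\setminus C$, and adds the edge $xc_i$ exactly when $x\in Q_i$. This writes out $H$ in time proportional to $|C|^2+\sum_i|Q_i|$; since all edges inside $C$ and every membership $x\in Q_i$ witness an edge in $G=H^2$, this is within the stated bound (and within the output size, which $H$ unavoidably inherits from its clique $C$).

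The only genuinely delicate point is the complexity of step~(i); everything else is bookkeeping of cliques in a chordal graph. The two classical strongly chordal recognition results provide exactly the matching upper bound, so no new algorithmic ingredient is needed. I would present the argument as a short verification that each of the four steps fits inside $\mathcal{O}(\min\{n^2,m\log n\})$ and that the resulting $H$ is a strongly chordal split graph with $H^2=G$ by Proposition~\ref{prop:1} and (for strong chordality) by the observation that the $H$ produced in that construction inherits strong chordality from the structure of $G$, as established in the proof of Theorem~\ref{LT11Char}.
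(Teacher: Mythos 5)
The paper states this theorem as a cited result of Le and Tuy and contains no proof of its own, presenting it only as the ``algorithmic implication'' of Theorem~\ref{LT11Char}; your proposal is exactly that derivation --- strongly chordal recognition in $\mathcal O(\min\{n^2,m\log n\})$ via Spinrad / Paige--Tarjan, listing the at most $n$ maximal cliques of a chordal graph, computing $\bigcap\mathcal C(G)$ (the universal vertices) and checking the inequality, and outputting the root of Proposition~\ref{prop:1} --- and it is correct. The one point you properly defer, that the constructed root is itself strongly chordal when $G$ is, is indeed part of the cited proof of Theorem~\ref{LT11Char} rather than something you need to reprove here.
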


Subsequently, we considered the case of 3-sun-free split graphs.

\begin{theorem}[Le, Oversberg and Schaudt~\cite{LOS14}]\label{thm:squareofs3freesplit}
A graph $G$ is the square of a connected 3-sun-free split graph if and only if $G$ is $(G_1, G_2, G_3, G_4)$-free 
and satisfies $\big|\bigcap \mathcal{C}(G) \big|\ge |\mathcal{C}(G)|$.
\end{theorem}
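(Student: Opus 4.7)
The plan is to prove the two directions separately.

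\textbf{Necessity.} Suppose $G = H^2$ for a connected 3-sun-free split graph $H$. The inequality $\big|\bigcap \mathcal{C}(G)\big| \ge |\mathcal{C}(G)|$ is immediate from Proposition~\ref{prop:2}. To show $G$ is $(G_1, G_2, G_3, G_4)$-free, Theorem~\ref{clique-Helly} reduces the task to showing $G$ is hereditary clique-Helly. I proceed by contraposition: given an induced copy of some $G_i$ in $G$, I use Proposition~\ref{prop:2} to identify the three pairwise-intersecting maximal triangles of $G_i$ with closed neighborhoods $N_H[v_1], N_H[v_2], N_H[v_3]$ of three distinct clique vertices in $C$. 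The empty triple intersection together with the non-edges of $G_i$ then produces three pairwise non-adjacent vertices $u_1, u_2, u_3 \in I$ exhibiting exactly the 3-sun adjacency pattern with the $v_i$'s, contradicting the assumption on $H$.

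\textbf{Sufficiency.} Assume $G$ is $(G_1, G_2, G_3, G_4)$-free with $\big|\bigcap \mathcal{C}(G)\big| \ge |\mathcal{C}(G)|$. Proposition~\ref{prop:1} produces a split square root $H = (C \cup I, E(H))$ of $G$ with $C = \{c_1, \ldots, c_q\} = \bigcap \mathcal{C}(G)$ indexed by $\mathcal{C}(G) = \{Q_1, \ldots, Q_q\}$. Connectivity of $H$ follows from that of $G$ since $\dist_H(x, y) \le 2\,\dist_G(x, y)$. Assume for contradiction that $H$ contains an induced 3-sun with triangle $\{v_1, v_2, v_3\}$ and tips $\{u_1, u_2, u_3\}$, with $u_j \sim v_j, v_{j+1}$ (indices mod $3$). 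A short case check forces $\{v_1, v_2, v_3\} \subseteq C$ and $\{u_1, u_2, u_3\} \subseteq I$: if some $u_j \in C$, then $u_j$ is adjacent to every vertex of $C$, so $u_j \not\sim v_{j+2}$ forces $v_{j+2} \in I$; since at most one of the pairwise non-adjacent $u_k$'s can lie in the clique $C$, $u_{j+1} \in I$, and then $u_{j+1} \sim v_{j+2}$ with both in $I$ contradicts the independence of $I$. Hence $\{u_1, u_2, u_3\} \subseteq I$, after which any $v_i \in I$ would similarly force an adjacency between some $u_k \in I$ and $v_i \in I$, impossible.

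Writing $v_i = c_{k_i}$, the 3-sun adjacencies translate to $u_j \in Q_{k_j} \cap Q_{k_{j+1}} \setminus Q_{k_{j+2}}$; maximality of $Q_{k_{j+2}}$ yields $w_j \in Q_{k_{j+2}}$ with $u_j \not\sim w_j$ in $G$. The six vertices $u_1, u_2, u_3, w_1, w_2, w_3$ are pairwise distinct; in $G$, the $u_i$'s form a triangle, and each $w_j$ is adjacent to $u_{j+1}, u_{j+2}$ but not to $u_j$. Depending on whether $0$, $1$, $2$, or $3$ of the pairs $w_i w_j$ are edges in $G$, the induced subgraph on these six vertices is isomorphic to $G_1$, $G_2$, $G_3$, or $G_4$ respectively (cf.\ Figure~\ref{fig:G1234}), contradicting the hypothesis. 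The main obstacle is this concluding pattern-match: one must verify that the forced ``core'' structure — an $\{u_1, u_2, u_3\}$-triangle with each $w_j$ missing exactly $u_j$ — together with any choice of edges among the $w_j$'s reproduces one of the four Prisner graphs up to isomorphism. The necessity direction essentially reverses this correspondence, pulling an induced $G_i$ in $G$ back to three ``apex'' vertices in $C$ and three suitable tip vertices in $I$.
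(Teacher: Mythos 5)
A preliminary remark: the paper states this theorem as a result imported from~\cite{LOS14} and gives no proof of it here, so there is no in-paper argument to compare against; I can only assess your proposal on its own terms.

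Your sufficiency direction is correct and essentially complete. Applying Proposition~\ref{prop:1}, forcing the triangle of a putative $3$-sun into $C$ and its tips into $I$, translating the sun into the membership pattern $u_j\in Q_{k_j}\cap Q_{k_{j+1}}\setminus Q_{k_{j+2}}$, extracting the witnesses $w_j$ by maximality of $Q_{k_{j+2}}$, and matching the six-vertex configuration against $G_1,\dots,G_4$ all check out: the $u_i$ are pairwise adjacent in $G$ (any two share one of the $Q_{k_i}$), each $w_j$ is adjacent to $u_{j+1},u_{j+2}$ but not $u_j$, the six vertices are pairwise distinct, and the only remaining freedom is the set of edges among $w_1,w_2,w_3$, which yields exactly the four Prisner graphs.

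The necessity direction, however, has a genuine gap as sketched. Take an induced copy of some $G_i$ with central triangle $\{x_1,x_2,x_3\}$ and outer vertices $y_1,y_2,y_3$, where $y_j$ misses exactly $x_j$, and extend the triangles $T_j=\{y_j\}\cup(\{x_1,x_2,x_3\}\setminus\{x_j\})$ to maximal cliques $Q_j=N_H[v_j]$ of $G$ via Proposition~\ref{prop:2}. First, the ``empty triple intersection'' you invoke is false for these cliques of $G$: since each $v_j$ lies in the clique side $C$ of $H$, we have $Q_1\cap Q_2\cap Q_3\supseteq C\neq\emptyset$; only the intersection restricted to the six chosen vertices is empty. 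Second, and more importantly, the natural reading of your sketch --- that the outer (tip) vertices $y_j$ of the $G_i$-copy become the tips of a $3$-sun on $v_1,v_2,v_3$ --- does not work: $y_j$ lies in $Q_j$ but in neither $Q_{j+1}$ nor $Q_{j+2}$ (because $x_j\in Q_{j+1}\cap Q_{j+2}$ and $y_j\not\sim x_j$), so in $H$ each $y_j$ is adjacent to at most one of the $v_i$, not two. The correct tips are the central-triangle vertices: $x_j\in Q_{j+1}\cap Q_{j+2}$ while $x_j\notin Q_j=N_H[v_j]$; since $N_H[v_j]\supseteq C$ this forces $x_j\in I$, and $x_j$ is $H$-adjacent to exactly $v_{j+1}$ and $v_{j+2}$. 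Hence $\{v_1,v_2,v_3,x_1,x_2,x_3\}$ induces a $3$-sun in $H$, the desired contradiction. With this substitution your necessity direction goes through and becomes the exact mirror of your sufficiency argument.
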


With Theorem~\ref{clique-Helly}, we obtained the following reformulation of Theorem~\ref{thm:squareofs3freesplit}.

\begin{theorem}[Le, Oversberg and Schaudt~\cite{LOS14}]\label{thm:squareofs3freesplit-v2}
A graph $G$ is the square of a connected hereditary clique-Helly split graph if and only if $G$ is a hereditary clique-Helly graph satisfying $\big|\bigcap \mathcal{C}(G) \big|\ge |\mathcal{C}(G)|$.
\end{theorem}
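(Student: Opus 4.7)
The plan is to derive this theorem directly from Theorem~\ref{thm:squareofs3freesplit} by translating both sides of the biconditional using the characterization of hereditary clique-Helly graphs (Theorem~\ref{clique-Helly}) and the special relationship between 3-sun-freeness and the hereditary clique-Helly property within the class of split graphs. There is no need to redo any of the structural analysis underlying Theorem~\ref{thm:squareofs3freesplit}; the proof should be a short equivalence-chasing argument.

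Concretely, I would split the argument into two symmetric translations. For the root side, I would invoke the fact already recorded in the paper, namely that a split graph is hereditary clique-Helly if and only if it is 3-sun-free; this identifies the class of connected hereditary clique-Helly split roots with the class of connected 3-sun-free split roots. Thus the hypothesis ``$G = H^2$ for some connected hereditary clique-Helly split graph $H$'' is equivalent to ``$G = H^2$ for some connected 3-sun-free split graph $H$.'' For the target-graph side, I would apply Theorem~\ref{clique-Helly} (Prisner) to replace the forbidden induced subgraph condition ``$(G_1,G_2,G_3,G_4)$-free'' by the equivalent statement ``$G$ is hereditary clique-Helly.''

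Putting these two translations together inside Theorem~\ref{thm:squareofs3freesplit} yields precisely the statement of Theorem~\ref{thm:squareofs3freesplit-v2}. Both directions are then immediate:
\begin{itemize}
\item[($\Rightarrow$)] If $G=H^2$ for some connected hereditary clique-Helly split graph $H$, then $H$ is 3-sun-free, so by Theorem~\ref{thm:squareofs3freesplit} $G$ is $(G_1,G_2,G_3,G_4)$-free and $|\bigcap \mathcal{C}(G)|\ge |\mathcal{C}(G)|$; by Theorem~\ref{clique-Helly} the forbidden-induced-subgraph condition means $G$ is hereditary clique-Helly.
\item[($\Leftarrow$)] If $G$ is hereditary clique-Helly with $|\bigcap \mathcal{C}(G)|\ge |\mathcal{C}(G)|$, then by Theorem~\ref{clique-Helly} $G$ is $(G_1,G_2,G_3,G_4)$-free, so Theorem~\ref{thm:squareofs3freesplit} supplies a connected 3-sun-free split graph $H$ with $G=H^2$, and this $H$ is hereditary clique-Helly.
\end{itemize}

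I do not expect a real obstacle here; the only point worth writing out carefully is the ``split graph is hereditary clique-Helly iff 3-sun-free'' equivalence used on the root side. This is stated in the text immediately after Theorem~\ref{clique-Helly} and follows from Prisner's characterization together with the simple observation that none of $G_1,G_2,G_3,G_4$ is a split graph (each contains an induced $2K_2$ or $C_4$), so forbidding them inside a split graph reduces to forbidding the only remaining obstruction, namely the $3$-sun.
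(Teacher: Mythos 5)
Your overall argument is correct and is exactly the route the paper takes: Theorem~\ref{thm:squareofs3freesplit-v2} is presented there as an immediate reformulation of Theorem~\ref{thm:squareofs3freesplit} via Prisner's characterization, translating the target-graph condition by Theorem~\ref{clique-Helly} and the root condition by the equivalence ``$3$-sun-free $=$ hereditary clique-Helly'' for split graphs.

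One concrete error in your final parenthetical, though: it is not true that none of $G_1,\dots,G_4$ is a split graph. The graph $G_1$ \emph{is} the $3$-sun $S_3$ itself, which is a split graph (clique $\{v_1,v_2,v_3\}$, independent set $\{u_1,u_2,u_3\}$). Indeed, if all four graphs were non-split, then forbidding them inside a split graph would be vacuous and every split graph would be hereditary clique-Helly --- contradicting the fact that $S_3$ is a split graph that is not clique-Helly (its three maximal triangles pairwise intersect but have empty common intersection). The correct justification is: $G_2$, $G_3$ and $G_4$ each contain an induced $C_4$ (e.g.\ $\{b_1,b_3,b_5,b_4\}$ in $G_2$) and hence are not split, so within the class of split graphs the condition ``$(G_1,G_2,G_3,G_4)$-free'' collapses to ``$G_1$-free,'' i.e.\ $3$-sun-free. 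Since you only use this equivalence as a fact already recorded in the paper, the slip does not affect the validity of your proof, but the explanation should be fixed.
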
 

As a consequence of Theorem~\ref{thm:squareofs3freesplit}, we derived the following result.

\begin{theorem}[Le, Oversberg and Schaudt~\cite{LOS14}]\label{thm:recogsquareofs3freesplit}
It can be decided in $\mathcal O(n^2m)$ time whether a given $n$-vertex $m$-edge graph has a $3$-sun-free split square root, 
and if so, such a square root can be constructed in the same time.
\end{theorem}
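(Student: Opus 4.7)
The plan is to convert the characterization in Theorem~\ref{thm:squareofs3freesplit} directly into a recognition algorithm: test the two conditions listed there, and, if both hold, output the explicit construction from Proposition~\ref{prop:1}. All the mathematics is done; the task reduces to implementing each check within the claimed $\mathcal O(n^2m)$ budget.

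First, to test that $G$ is $(G_1,G_2,G_3,G_4)$-free, I invoke Theorem~\ref{clique-Helly} and instead test the equivalent property that $G$ is hereditary clique-Helly. This is done via the extended-triangle characterization: for every triangle $\{u,v,w\}$ of $G$, the induced subgraph on $\{u,v,w\} \cup (N(u)\cap N(v)) \cup (N(v)\cap N(w)) \cup (N(u)\cap N(w))$ must contain a vertex adjacent to all the others. Triangles can be enumerated in $\mathcal O(nm)$, and for each triangle the extended set and its universal vertex can be checked in $\mathcal O(n+m)$, giving an overall $\mathcal O(n^2m)$ test. If the test fails we reject.

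Second, I enumerate the maximal cliques of $G$. By Proposition~\ref{prop:2}, a positive instance satisfies $|\mathcal{C}(G)|\le n$, so I run a clique-enumeration procedure (for example an algorithm tailored to hereditary clique-Helly graphs, where maximal cliques can be output with polynomial delay) and abort as soon as $n+1$ maximal cliques are produced. Assuming at most $n$ are found, I intersect them in $\mathcal O(n^2)$ per clique, obtaining $\bigcap \mathcal{C}(G)$ in $\mathcal O(n\cdot|\mathcal{C}(G)|)=\mathcal O(n^2)$ time, and compare $|\bigcap \mathcal{C}(G)|$ with $|\mathcal{C}(G)|$; again reject if smaller.

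Third, if both tests succeed, I apply Proposition~\ref{prop:1} verbatim: label $|\mathcal{C}(G)|$ distinct vertices of $\bigcap \mathcal{C}(G)$ as $c_1,\ldots,c_q$ corresponding to $Q_1,\ldots,Q_q$, make $C=\{c_1,\ldots,c_q\}$ a clique, let $I=V(G)-C$ be an independent set, and add the edge $xc_i$ whenever $x\in Q_i$. This is computable in $\mathcal O(n\cdot|\mathcal{C}(G)|)$ time. Theorem~\ref{thm:squareofs3freesplit} guarantees that the resulting split graph $H$ satisfies $G=H^2$ and, moreover, that some $3$-sun-free split square root exists; one verifies that the specific $H$ produced is itself $3$-sun-free using the fact that $G$ is hereditary clique-Helly (the argument is implicit in the proof of Theorem~\ref{thm:squareofs3freesplit}). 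The correctness of the whole procedure follows immediately from Theorem~\ref{thm:squareofs3freesplit} and Proposition~\ref{prop:1}.

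The main obstacle is the second step: in general a graph may have exponentially many maximal cliques, so a black-box use of Bron--Kerbosch is not enough. The key observation needed is that, under the hereditary clique-Helly assumption verified in step one, maximal cliques can be listed with polynomial delay, and the early-termination threshold of $n+1$ cliques (justified by Proposition~\ref{prop:2}) keeps the total work within $\mathcal O(n^2m)$. Making this enumeration step explicit and bounding its running time precisely is what carries the proof.
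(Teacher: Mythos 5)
Your overall architecture is the intended one: the paper obtains this theorem ``as a consequence of Theorem~\ref{thm:squareofs3freesplit}'' by testing the two conditions of that characterization and, on success, outputting the root of Proposition~\ref{prop:1}; your handling of the clique-enumeration step (polynomial-delay listing with early termination after $n+1$ maximal cliques, justified by Proposition~\ref{prop:2}) is also the right idea. The genuine gap is in your first step. The criterion you propose to test---every extended triangle contains a vertex adjacent to all others---is Szwarcfiter's characterization of \emph{clique-Helly} graphs, not of \emph{hereditary} clique-Helly graphs, and these classes differ. Concretely, let $G = S_3 \oplus K_4$. Every extended triangle of $G$ contains the four universal vertices, each of which is universal inside that extended triangle, so your test passes; moreover $\bigl|\bigcap\mathcal{C}(G)\bigr| = |\mathcal{C}(G)| = 4$, so your algorithm accepts $G$ and outputs the split graph $H$ of Proposition~\ref{prop:1}. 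But $G$ contains $G_1 = S_3$ as an induced subgraph, so by Theorem~\ref{thm:squareofs3freesplit} it has \emph{no} $3$-sun-free split square root, and indeed the $H$ you output contains an induced $3$-sun. The correct subroutine tests $(G_1,\ldots,G_4)$-freeness directly: these four graphs are exactly a triangle $\{v_1,v_2,v_3\}$ together with vertices $u_1,u_2,u_3$ where $u_i$ is adjacent to $v_i,v_{i+1}$ and nonadjacent to $v_{i+2}$ (with arbitrary edges among the $u_i$), so it suffices to check, for every triangle, whether all three sets $\bigl(N(v_i)\cap N(v_{i+1})\bigr)\setminus N[v_{i+2}]$ are nonempty.

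A secondary issue is the running-time arithmetic: a graph can have $\Theta(nm)$ triangles, so $\mathcal O(n+m)$ work per triangle gives $\mathcal O(n^2m + nm^2)$, and the term $nm^2$ exceeds $n^2m$ whenever $m>n$. Both problems disappear with the corrected per-triangle test above, which costs $\mathcal O(n)$ per triangle using an adjacency matrix, for $\mathcal O(n^2m)$ in total; the remaining steps (listing at most $n$ maximal cliques via a polynomial-delay algorithm such as that of~\cite{TIAS77}, computing $\bigcap\mathcal{C}(G)$, and building the root of Proposition~\ref{prop:1}) fit in the same budget as you describe.
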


In \cite{MS13} the following polynomially solvable case of \sswp\ is proved.

\begin{theorem}[Milani\v c and Schaudt~\cite{MS13}]\label{MS13}
It can be decided in linear time whether a given chordal graph has a split square root, and if so, such a square root can be constructed in the same time.
\end{theorem}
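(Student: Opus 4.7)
The plan is to establish the characterization that a chordal graph $G$ admits a split square root if and only if $\big|\bigcap \mathcal{C}(G)\big| \ge |\mathcal{C}(G)|$, and then to check this condition and build a witness in linear time. Sufficiency requires no assumption on $G$ at all: Proposition~\ref{prop:1} explicitly constructs a split root whenever the inequality holds, so the content lies entirely in necessity and in the algorithmic implementation.

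For necessity, suppose $G$ is chordal and $G = H^2$ for some connected split graph $H = (C\cup I, E_H)$. Since $C$ is a clique of $H$, every vertex of $C$ is at distance at most $2$ from every vertex of $H$; hence $C$ is contained in every maximal clique of $G$, which yields $C \subseteq \bigcap \mathcal{C}(G)$. To bound $|\mathcal{C}(G)|$ from above by $\big|\bigcap \mathcal{C}(G)\big|$, I would prove the structural statement that whenever a chordal $G$ admits a split root, it also admits a $3$-sun-free split root; applying Proposition~\ref{prop:2} to this refined root then delivers the needed inequality directly. To establish the structural statement, I would start from a split root $H$ of $G$ minimizing an appropriate parameter and, assuming an induced $3$-sun on a clique $\{v_1,v_2,v_3\}\subseteq C$ and a stable set $\{u_1,u_2,u_3\}\subseteq I$, describe a local re-routing of the $I$--$C$ adjacencies (for instance, extending $u_1$ to be adjacent to all three $v_j$'s) and argue, using the chordality of $G$, that $H^2$ is preserved while the $3$-sun is destroyed, contradicting minimality.

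Algorithmically, a perfect elimination ordering of the chordal graph $G$ is computable in $O(n+m)$ time via LexBFS or MCS; from it one extracts in total linear time the at most $n$ maximal cliques $Q_1,\ldots,Q_q$, their sizes, and $\bigcap \mathcal{C}(G)$. Test the inequality; if it fails, output ``no''. Otherwise, invoke Proposition~\ref{prop:1}: identify $C$ with $\bigcap \mathcal{C}(G)$, fix an injection from $\{Q_1,\ldots,Q_q\}$ into $C$, and for each $x \in I := V(G)\setminus C$ and each $Q_i$ containing $x$ add the edge $xc_i$ to $H$. This step is linear in the total maximal-clique size, hence in $n+m$. A final $O(n+m)$ verification that $G = H^2$, exploiting that the split graph $H$ has diameter at most $3$ so that all pertinent distances can be read directly from its adjacency lists, closes the algorithm. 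The main obstacle is the structural step: showing that the local $3$-sun modification preserves $H^2$ everywhere in the graph, for which the chordality of $G$ must be invoked in a non-trivial way to rule out the configurations in $H$ that could otherwise turn a non-edge of $H^2$ into an edge or vice versa.
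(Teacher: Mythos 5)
This statement is quoted from \cite{MS13} and is not proved in the present paper, so your proposal can only be judged on its own terms. It contains a fatal error: the characterization on which the whole argument rests --- that a connected chordal graph has a split square root if and only if $\big|\bigcap\mathcal{C}(G)\big|\ge|\mathcal{C}(G)|$ --- is false, and so is the structural claim you propose to derive it from (that a chordal graph with a split square root always has a $3$-sun-free one). Concretely, let $H$ be the split graph with clique $C=\{c_1,c_2,c_3\}$ and independent set $I=\{a_1,a_2,a_3,b_1,b_2,b_3\}$, where $N_H(a_i)=\{c_i\}$, $N_H(b_1)=\{c_1,c_3\}$, $N_H(b_2)=\{c_1,c_2\}$, $N_H(b_3)=\{c_2,c_3\}$. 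Then $H^2=K_3\oplus S_3$: the $c_i$ become universal, the $b_i$ form a triangle, the $a_i$ remain independent, and each $a_i$ is adjacent to exactly the two $b_j$ with $c_i\in N_H(b_j)$. The graph $G=K_3\oplus S_3$ is connected and chordal (a join of a complete graph with a chordal graph is chordal), and it has a split square root, namely $H$. Yet its universal vertices are exactly $c_1,c_2,c_3$, so $\bigcap\mathcal{C}(G)$ has size $3$, while $\mathcal{C}(G)$ consists of the four cliques $\{c_1,c_2,c_3\}\cup Q$ for the four maximal cliques $Q$ of the $3$-sun; hence $\big|\bigcap\mathcal{C}(G)\big|=3<4=|\mathcal{C}(G)|$, and your algorithm would reject this yes-instance. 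By Proposition~\ref{prop:2} (or Theorem~\ref{thm:squareofs3freesplit}), the same example shows that $G$ has \emph{no} $3$-sun-free split square root, so the $3$-sun in $H$ cannot be ``re-routed away'' and the minimal-counterexample argument you sketch cannot succeed in any form.

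The underlying issue is that for chordal graphs the number of maximal cliques can strictly exceed the minimum number of cliques needed to cover all edges (the $3$-sun is the smallest example: four maximal cliques, but its three ear-triangles cover every edge), and it is the latter quantity --- essentially the intersection number of $G$ minus its universal vertices --- that governs the existence of a split square root. The argument of \cite{MS13} goes through this set-representation viewpoint, not through $\mathcal{C}(G)$. A secondary point: even had the characterization been true, your proposal leaves the crucial re-routing step (``preserves $H^2$ everywhere'') entirely unargued, so the proof would still have been incomplete.
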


\subsection{Our Contributions}

Our first and main result provides a general framework for obtaining polynomially solvable cases of \srwp, in the case when the 3-sun is forbidden for the property $\mathcal P$.
For this, we introduce the following notion.

\begin{definition}
Let $\mathcal G$ be a hereditary class of $3$-sun-free split graphs.
We call $\mathcal G$ \emph{simple} if there is a function $f: \mathcal G \to \mathcal G$ such that $f(H)^2 \cong H^2 \oplus K_1$, for all $H \in \mathcal G$.
\end{definition}

We remark that the function $f$ in the definition of simple graph classes is not needed to be computable in polynomial time.

Loosely speaking, $\mathcal G$ is simple if the graphs in $\mathcal G$ can be augmented by one vertex which appears as a universal vertex in the square while the rest of the square remains unchanged.
As we see below, this condition is met by several natural classes of split graphs.

\begin{theorem}\label{thm:recognition}
Let $\mathcal G$ be a simple hereditary class of $3$-sun-free split graphs.
Assuming $\mathcal G$ is polynomially recognizable, squares of graphs of $\mathcal G$ can be recognized in polynomial time.
\end{theorem}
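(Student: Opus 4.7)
The plan is to reduce to the tight case $|\bigcap \mathcal{C}(G)| = |\mathcal{C}(G)|$, in which Proposition~\ref{prop:1} supplies a $3$-sun-free split root of $G$ that is forced to be unique up to isomorphism. Concretely, given $G$, I would first apply Theorem~\ref{thm:recogsquareofs3freesplit} to test whether $G$ admits a $3$-sun-free split root at all; if not, output NO. Otherwise let $U = \bigcap \mathcal{C}(G)$, $q = |\mathcal{C}(G)|$ and $k = |U| - q$, then remove any $k$ vertices of $U$ from $G$ to obtain $G^\circ$ (which then satisfies $|\bigcap \mathcal{C}(G^\circ)| = |\mathcal{C}(G^\circ)| = q$), build the canonical root $H_0^\circ$ of $G^\circ$ via Proposition~\ref{prop:1}, and output YES iff $H_0^\circ \in \mathcal{G}$. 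Every step would run in polynomial time thanks to Theorem~\ref{thm:recogsquareofs3freesplit} and the assumed recogniser for $\mathcal{G}$.

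One direction of correctness is easy via the simple property: if $H_0^\circ \in \mathcal{G}$, iterating $f$ produces $f^k(H_0^\circ) \in \mathcal{G}$ whose square is isomorphic to $G^\circ \oplus K_k \cong G$, because each deleted vertex is universal in $G$. For the other direction I would prove the lemma that if $G$ has a $\mathcal{G}$-root and $|U| > q$, then $G - v$ has a $\mathcal{G}$-root for every $v \in U$. Starting from a $\mathcal{G}$-root $H$ of $G$, Proposition~\ref{prop:2} gives $C_H \subseteq U$ and $|C_H| \ge q$. If $C_H \subsetneq U$, any $v' \in U \setminus C_H \subseteq I_H$ works: $H - v' \in \mathcal{G}$ by heredity, and removing an independent-set vertex leaves the other distances in $H$ unchanged, so $(H - v')^2 = G - v'$. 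Otherwise $C_H = U$ with $|C_H| > q$; by Proposition~\ref{prop:2} the map $c \mapsto N_H[c]$ from $C_H$ has the $q$ maximal cliques of $G$ as inclusion-maximal images, so pigeonhole yields some $c \in C_H$ that is either strictly dominated ($N_H[c] \subsetneq N_H[c']$) or duplicated ($N_H[c] = N_H[c']$ with $c' \neq c$); in either case every distance-two path in $H$ through $c$ can be rerouted via $c'$, so $(H - c)^2 = G - c$ and $H - c \in \mathcal{G}$. Finally, for any other $v \in U$, $G - v \cong G - v'$ (or $G - c$) via the automorphism of $G$ that swaps $v$ with $v'$ (resp.\ $c$) and fixes everything else, so $G - v$ also admits a $\mathcal{G}$-root.

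Iterating this lemma $k$ times yields that $G$ has a $\mathcal{G}$-root iff $G^\circ$ does. In the tight case, Proposition~\ref{prop:2} forces $C_H = \bigcap \mathcal{C}(G^\circ)$ in every $3$-sun-free split root of $G^\circ$, and Proposition~\ref{prop:1} then determines $H$ uniquely up to isomorphism as $H_0^\circ$; hence $G^\circ$ admits a $\mathcal{G}$-root iff $H_0^\circ \in \mathcal{G}$, which closes the argument. The step I expect to be trickiest is verifying $(H - c)^2 = G - c$ in the $C_H = U$ case: one has to check that no pair of vertices at distance two in $H$ uses $c$ as its unique intermediate, which is exactly where the domination relation $N_H[c] \subseteq N_H[c']$ steps in to provide an alternative intermediate $c'$.
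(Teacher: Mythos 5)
Your proposal is correct, and at the algorithmic level it coincides with the paper's: after testing for a $3$-sun-free split root via Theorem~\ref{thm:recogsquareofs3freesplit}, your canonical root $H_0^\circ$ of the reduced graph $G^\circ$ is exactly the trunk $T(G)$ that the paper builds and tests for membership in $\mathcal G$, and your sufficiency argument (iterating $f$ to restore the $k$ deleted universal vertices) is the same as in Lemma~\ref{lem:simple}. Where you genuinely diverge is the necessity direction. The paper proves once and for all (Lemma~\ref{lem:trunk}~(b)) that \emph{every} connected $3$-sun-free split root $H$ of $G$ contains $T(G)$ as an induced subgraph, by using Proposition~\ref{prop:2} to pick a transversal $C'\subseteq C$ realizing the maximal cliques of $G$; heredity of $\mathcal G$ then finishes in one line. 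You instead peel off the excess universal vertices one at a time, showing that a $\mathcal G$-root survives each deletion via the case analysis on whether the deleted vertex can be taken in $I_H$ or must be a dominated/duplicated vertex of $C_H$, and you only invoke the embedding/uniqueness argument in the tight case $|\bigcap\mathcal C(G^\circ)|=|\mathcal C(G^\circ)|$. Both routes are sound: your rerouting check ($N_H[c]\subseteq N_H[c']$ supplies an alternative midpoint for any pair whose only common neighbour was $c$, and a deleted $I_H$-vertex is never the unique midpoint of a non-adjacent pair since its neighbourhood is a clique) is exactly what is needed, and the pigeonhole on $|C_H|>|\mathcal C(G)|$ is valid. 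The trade-off is that the paper's direct embedding is shorter and is reused as a structural tool in the proof of Theorem~\ref{thm:simple-and-finite}, whereas your peeling lemma is more elementary but local; also note that your tight-case uniqueness claim (that $C_H$ is forced to equal $\bigcap\mathcal C(G^\circ)$, whence $H\cong H_0^\circ$) still requires the observation that every clique-side vertex of a connected split root is universal in the square, i.e.\ essentially the same computation the paper performs in Lemma~\ref{lem:trunk}~(b), so you do not fully avoid that argument --- you merely defer it to the base case.
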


Next we give a list of examples of classes to which Theorem~\ref{thm:recognition} can be successfully applied.
To the best of our knowledge, all previously known polynomial time solvable cases for \srwp\ are contained in this list. (The complement of a $3$-sun is also called \emph{net}. See the Appendix for more information about the graph classes listed below.)

\begin{corollary}\label{cor:classes}
For the following graph classes $\mathcal G$ it is decidable in polynomial time whether a given connected graph has a square root in $\mathcal G$.
If existent, a square root of $G$ in $\mathcal G$ can be computed in polynomial time, too.
\begin{enumerate}[\em (a)]
	\item $3$-sun-free split graphs;
	\item $(\mbox{$3$-sun, net})$-free split graphs;
	\item strongly chordal split graphs;
    \item odd-sun-free split graphs; 
	\item interval split graphs; 
	\item permutation split graphs; 
	\item comparability split graphs; 
	\item probe threshold split graphs.
\end{enumerate}
\end{corollary}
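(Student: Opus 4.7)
The plan is to apply Theorem~\ref{thm:recognition} to each of the eight classes. For each $\mathcal G$ I verify three properties: that $\mathcal G$ is a hereditary class of 3-sun-free split graphs, that $\mathcal G$ is polynomially recognizable, and that $\mathcal G$ is simple. Hereditariness is immediate from the forbidden-subgraph and intersection definitions of the listed classes; containment in the 3-sun-free split graphs follows because the 3-sun is itself a sun (hence neither strongly chordal nor odd-sun-free), its three tips form an asteroidal triple (hence not interval), and the triangle of the 3-sun admits no transitive orientation extendible to the tips (hence not comparability, not permutation, and not probe threshold); all of this is recalled in the Appendix. Polynomial-time recognition is well documented for each of these classes.

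The heart of the corollary is the verification of simplicity, for which I use a single uniform construction. Given a connected $H \in \mathcal G$ with split partition $(C, I)$ and $|V(H)| \ge 2$, pick any $c \in C$ (nonempty by connectedness) and let $f(H)$ be $H$ with an added true twin $c'$ of $c$, i.e., $N_{f(H)}[c'] = N_H[c] \cup \{c'\}$; the trivial case $H = K_1$ is handled by $f(K_1) = K_2$. The identity $f(H)^2 \cong H^2 \oplus K_1$ follows from two observations. First, every $c \in C$ has eccentricity at most $2$ in $H$ (directly to its clique-mates and, via $C$, to every independent-set vertex), so $c'$ inherits this property and is universal in $f(H)^2$. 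Second, any length-$2$ path through $c'$ between two vertices $v, w \in V(H)$ forces $v, w \in N_H[c]$, so $v$ and $w$ are already at distance at most $2$ in $H$ via $c$; hence no new edge arises within $V(H)$ and the square restricted to $V(H)$ equals $H^2$.

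It remains to verify $f(H) \in \mathcal G$ for each class. Since $f(H)$ arises from $H$ by substituting $K_2$ for $c$, this reduces to closure of $\mathcal G$ under this modular substitution. For the classes defined purely by forbidden induced subgraphs (3-sun-free, $(3$-sun, net$)$-free, strongly chordal, odd-sun-free), the argument is a short case analysis: any forbidden induced subgraph in $f(H)$ involving $c'$ can be transformed, by swapping $c'$ with $c$ (if only one of the twins appears) or by deleting one of the adjacent twins (if both appear), into the same forbidden subgraph present in $H$, a contradiction. Interval split graphs are closed under true twins by duplicating the interval assigned to $c$; closure of comparability and permutation graphs under modular substitution is classical; and for probe threshold split graphs one places $c$ on the probe side and checks that twinning preserves threshold-completability via the $(P_4, C_4, 2K_2)$-free criterion. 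The main obstacle of the proof is precisely this case-by-case verification, since each class requires its own structural argument, but each instance either appeals to a standard closure property or admits a short forbidden-subgraph argument. Once simplicity is established for all eight classes, Theorem~\ref{thm:recognition} delivers polynomial-time recognition, and the construction of a square root whenever one exists is implicit in its proof.
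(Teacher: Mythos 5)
Your overall strategy coincides with the paper's: verify that each of the eight classes is a hereditary, polynomially recognizable, simple class of $3$-sun-free split graphs, and then invoke Theorem~\ref{thm:recognition}. The difference lies in the witnessing function $f$, and this is where your proof has a genuine gap. The paper takes $f(H)$ to be $H$ plus a new vertex adjacent to \emph{exactly} the clique $C$ (so the new vertex has no neighbours in $I$), which makes membership $f(H)\in\mathcal G$ essentially immediate for all eight classes. You instead add a \emph{true twin} $c'$ of an arbitrary $c\in C$, so that $c'$ also inherits $c$'s neighbours in $I$. This fails for probe threshold split graphs. Concretely, let $H_0$ be the split graph with clique $\{a_2,a_4,a_5\}$ and independent set $\{a_1,a_6\}$, where $a_1\sim a_2$ and $a_6\sim a_4,a_5$. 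Then $H_0$ is a probe threshold split graph: taking $N=\{a_2,a_6\}$ as non-probes and adding the edge $a_2a_6$ yields a threshold graph ($a_2$ becomes universal, $a_1$ isolated, and a triangle remains). But adding a true twin of $a_2$ produces precisely the first forbidden graph of Fig.~\ref{fig:G4789}, so $f(H_0)$ is \emph{not} a probe threshold split graph. The root cause is that this forbidden graph contains adjacent true twins ($a_2,a_3$ and $a_4,a_5$ in the figure's labelling), so twin-addition can create it; your suggested repair of ``placing $c$ on the probe side'' does not rescue the argument as stated, since in every probe-threshold realization of $H_0$ the clique vertex $a_2$ must be a non-probe, so ``pick any $c\in C$'' is not available and one would still have to prove that a suitable probe clique vertex always exists and that twinning it preserves threshold-completability.

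Two smaller points. First, your generic forbidden-subgraph argument for classes (a)--(d) says that if both twins $c,c'$ appear in an induced forbidden $F$, one may ``delete one of the adjacent twins'' and obtain ``the same forbidden subgraph'' in $H$; deleting a vertex of $F$ does not yield $F$. The correct observation is that this case cannot occur, because none of the graphs $S_\ell$, net, $C_k$, $S_4^-$, co-$S_4^-$ contains a pair of adjacent vertices with equal closed neighbourhoods --- and it is exactly this observation that breaks down for the probe threshold list. Second, the 3-sun is not excluded from probe threshold (split) graphs by any transitive-orientation argument; it is excluded by the characterization of \cite{BLR09}. The simplest fix for your whole proof is to replace your $f$ by the paper's: attach the new vertex to $C$ only. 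Then the new vertex together with any $c\in C$ cannot both occur in any of the listed forbidden configurations, the verification becomes uniform across all eight classes, and the rest of your argument (universality of the new vertex in the square, no new distance-two pairs among old vertices, and the appeal to Theorem~\ref{thm:recognition}) goes through unchanged.
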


We complement this algorithmic result by proving a structural characterization of the squares of graphs contained in some of these classes.
For simple classes $\mathcal G$ of 3-sun-free split graphs, defined by finitely many forbidden induced subgraphs, we have the following result.
It shows that the graphs admitting a square root in $\mathcal G$ are essentially characterized by a finite list of forbidden induced subgraphs.

\begin{theorem}\label{thm:simple-and-finite}
Let $\mathcal G$ be a simple hereditary class of $3$-sun-free split graphs, defined by a finite set $\mathcal F$ of forbidden induced subgraphs.
There is a finite set $\mathcal F'$ of graphs such that the following holds:
a given connected graph $G$ has a square root in $\mathcal G$ if and only if $|\bigcap \mathcal C(G)| \ge |\mathcal C(G)|$ and $G$ is $\mathcal F'$-free.
\end{theorem}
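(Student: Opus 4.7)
\medskip
\noindent\textbf{Proof plan.} My plan is a two-step reduction: first, characterize the existence of a root in $\mathcal{G}$ in terms of a single canonical square root; second, translate forbidden induced subgraphs of that canonical root into forbidden induced subgraphs of $G$. Since every graph of $\mathcal{G}$ is $3$-sun-free split, Theorem~\ref{thm:squareofs3freesplit} implies that both $|\bigcap\mathcal{C}(G)|\ge|\mathcal{C}(G)|$ and $(G_1,G_2,G_3,G_4)$-freeness are necessary for $G$ to admit any root in $\mathcal{G}$. I place $G_1,G_2,G_3,G_4$ into $\mathcal{F}'$ and from now on assume these conditions. Proposition~\ref{prop:1} then provides the canonical root $H_G=(C_G\cup I_G, E(H_G))$ with $C_G=\bigcap\mathcal{C}(G)$ equal to the set of universal vertices of $G$; by Theorem~\ref{thm:squareofs3freesplit}, $H_G$ is $3$-sun-free.

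The core structural claim is that $G$ has a root in $\mathcal{G}$ if and only if $H_G\in\mathcal{G}$. The ``$\Leftarrow$'' direction is immediate because $H_G$ itself is a root of $G$. For ``$\Rightarrow$'', given any root $H\in\mathcal{G}$, I iteratively apply the function $f$ from the definition of a simple class: each application stays inside $\mathcal{G}$ and produces a root of $G\oplus K_t$ with one additional universal vertex in the square. For $t$ large enough, Proposition~\ref{prop:2} forces the resulting root to have clique containing (a copy of) $C_G$; taking the induced subgraph on $V(G)$, which is permissible by heredity of $\mathcal{G}$, then recovers the canonical root $H_G$ inside $\mathcal{G}$.

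With this equivalence in hand, $H_G\in\mathcal{G}$ amounts to $H_G$ being $\mathcal{F}$-free. Let $k=\max_{F\in\mathcal{F}}|V(F)|$. An induced copy of $F\in\mathcal{F}$ inside $H_G$ is pinned down by subsets $A\subseteq C_G$ and $B\subseteq I_G$ of total size at most $k$, together with, for each $c\in A$, the identity of the maximal clique $Q$ of $G$ that $c$ represents under the canonical labeling; indeed, the edge $cx$ of $H_G$ for $x\in B$ is equivalent to $x\in Q$. Such a configuration is certified inside $G$ by an induced subgraph on $O(k^2)$ vertices: the set $A\cup B$ together with a bounded number of auxiliary vertices pinning down the maximal-clique memberships and distinguishing universal from non-universal vertices of $G$. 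For each $F\in\mathcal{F}$ I enumerate, up to isomorphism, the finite collection $\mathcal{S}(F)$ of these witness subgraphs and set $\mathcal{F}'=\{G_1,G_2,G_3,G_4\}\cup\bigcup_{F\in\mathcal{F}}\mathcal{S}(F)$; by construction, $G$ is $\mathcal{F}'$-free if and only if $H_G$ is $\mathcal{F}$-free.

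The main obstacle is the structural claim that canonical-root membership captures root existence in $\mathcal{G}$. Two $3$-sun-free split roots of the same $G$ can be genuinely non-isomorphic (they may differ in which vertices of $C_G$ lie in the clique and which lie in the independent side), so no direct uniqueness argument is available, and a forbidden induced subgraph of $H_G$ need not appear in a smaller-clique root. The simple property is precisely the ingredient designed to bridge this gap: its iterated application, combined with heredity, allows ``filling up'' the clique while staying in $\mathcal{G}$. Getting the induction right---choosing, at each step, the right extra vertex so that the canonical root is eventually recovered as an induced subgraph---is where the bulk of the technical work lies.
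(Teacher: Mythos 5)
Your high-level plan (reduce to a canonical root, then translate its forbidden subgraphs into forbidden subgraphs of $G$) is in the right spirit, but both halves have genuine gaps. First, your core structural claim is pinned to the wrong canonical object. You claim ``$G$ has a root in $\mathcal G$ iff $H_G\in\mathcal G$'' where $H_G$ is the root of Proposition~\ref{prop:1}, whose clique is all of $\bigcap\mathcal C(G)$. The forward direction of this claim is exactly where your argument breaks: the function $f$ in the definition of a simple class is abstract --- it only guarantees $f(H)^2\cong H^2\oplus K_1$, with no control over how the new vertex is attached and no canonical embedding of $V(H)$ into $V(f(H))$. So ``taking the induced subgraph on $V(G)$'' of $f^t(H)$ is not well-defined, and there is no reason $H_G$ should appear as an induced subgraph of $f^t(H)$ for any $t$; Proposition~\ref{prop:2} only pins down $|\mathcal C(G)|$ clique vertices (one per maximal clique of the square), not $|\bigcap\mathcal C(G)|$ many, and says nothing about extra clique vertices being adjacent to nothing in $I$. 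You flag this yourself as ``where the bulk of the technical work lies,'' but it is not a technicality: the claim is likely false for an abstract simple class. The correct statement (the paper's Lemma~\ref{lem:simple}) uses the \emph{trunk} $T(G)$, whose clique has size exactly $|\mathcal C(G)|$: every $3$-sun-free split root contains $T(G)$ as an induced subgraph (Lemma~\ref{lem:trunk}), so $T(G)\in\mathcal G$ follows from heredity alone, and simplicity is only used in the easy direction to inflate $T(G)$ to a root $f^r(T(G))$ (which need not be $H_G$).

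Second, the translation step is asserted rather than proved. The sentence ``by construction, $G$ is $\mathcal F'$-free if and only if $H_G$ is $\mathcal F$-free'' is precisely the crux of the theorem. A bounded witness inside $G$ can certify that a vertex is \emph{not} in a given maximal clique or is \emph{not} universal (by exhibiting a non-neighbour), but it cannot certify universality or membership in $\bigcap\mathcal C(G)$, which are global properties; moreover, after passing to an induced subgraph $G'$ the maximal-clique structure changes, so one must verify that the canonical root of the small witness still contains $F$. The paper handles exactly this with a careful construction (the sets $S_v$, $U_v$, $W$, $W'$, $C'$ and the verification $Q_v\cap I(F)=S_v\cap I(F)$), and packages the finiteness of $\mathcal F'$ via a bound on the order of \emph{minimal obstructions} rather than by explicitly enumerating witness sets $\mathcal S(F)$ --- which neatly avoids having to prove your ``iff'' directly. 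As written, your proposal would need both the corrected structural lemma and this witness construction to be carried out before it constitutes a proof.
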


In particular, Theorem~\ref{thm:simple-and-finite} applies to the classes of (3-sun, net)-free split graphs, the largest self-complementary hereditary subclass of 3-sun-free split graphs.
It also applies to the class of interval split graphs, permutation split graphs, comparability split graphs, and probe threshold split graphs.
Let us stress the fact that Theorem~\ref{thm:simple-and-finite} is solely an existence statement, and it seems to be difficult to derive explicit such characterizations, say for any of the classes mentioned above.



Our last result concerns an important class of split graphs which is covered by Corollary~\ref{cor:classes} but not by Theorem~\ref{thm:simple-and-finite}, namely odd-sun-free split graphs.
For this, we need the following notions.
A hypergraph is \emph{balanced} if and only if its vertex-to-hyperedge incidence matrix is balanced.
It is known that a hypergraph is balanced if and only if its vertex-to-hyperedge incidence graph does not contain induced cycles of length $2k$, for any odd $k$.
Bonomo et al.~\cite{BDLS06} introduced the class of balanced graphs as the class of graphs whose maximal clique hypergraph is balanced.

\begin{theorem}\label{thm:odd-sun-free-characterization}
A graph $G$ is the square of a connected odd-sun-free split graph if and only if it is balanced and $\big|\bigcap \mathcal{C}(G)\big|\ge |\mathcal{C}(G)|$.
In particular, there is a polynomial time algorithm to recognize graphs that admit an odd-sun-free split square root.
\end{theorem}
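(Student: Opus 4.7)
My plan is to prove both directions of the biconditional by establishing a tight correspondence: induced odd suns in a split root $H$ correspond to induced cycles $C_{2k}$ ($k$ odd) in the clique--vertex incidence bipartite graph of $H^2$. Since every odd-sun-free split graph is in particular $3$-sun-free, Propositions~\ref{prop:1} and \ref{prop:2} are available. On the necessity side, Proposition~\ref{prop:2} already yields the intersection condition $\bigl|\bigcap \mathcal{C}(G)\bigr| \ge |\mathcal{C}(G)|$ and the identification $\mathcal{C}(G) = \{N_H[v] : v \in C \text{ with } N_H[v] \text{ inclusion-maximal}\}$; on the sufficiency side, Proposition~\ref{prop:1} supplies the candidate root $H = (C \cup I, E(H))$ to be analysed.

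For necessity, I assume $G = H^2$ with $H$ odd-sun-free split, and suppose for contradiction that $G$ is not balanced, so the incidence graph of $G$ contains an induced cycle $Q_1 x_1 Q_2 x_2 \cdots Q_k x_k Q_1$ of length $2k$ with $k$ odd (hence $k \ge 3$). Writing $Q_i = N_H[v_i]$, the induced-cycle conditions become $x_i \in N_H[v_i] \cap N_H[v_{i+1}]$ and $x_i \notin N_H[v_j]$ for $j \notin \{i, i+1\}$ (indices mod $k$). If $x_i$ were in $C$, then (whether or not $x_i$ coincides with some $v_j$) $x_i \in N_H[v_j]$ for every $j$, since $C$ is a clique of $H$; using $k \ge 3$, this contradicts $x_i \notin N_H[v_j]$ for any $j \notin \{i, i+1\}$. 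Therefore each $x_i \in I$, and the induced subgraph on $\{v_1, \ldots, v_k\} \cup \{x_1, \ldots, x_k\}$ is a $k$-sun in $H$: the $v_j$'s form a clique in $C$, the $x_i$'s form a stable set in $I$, and the cross-adjacencies match the sun pattern by the conditions above. This contradicts odd-sun-freeness.

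For sufficiency, given $G$ balanced with $\bigl|\bigcap \mathcal{C}(G)\bigr| \ge |\mathcal{C}(G)|$, I invoke Proposition~\ref{prop:1} to construct a split root $H = (C \cup I, E(H))$ of $G$ and claim $H$ is odd-sun-free. Suppose it contained an induced $\ell$-sun $S_\ell$ with $\ell \ge 3$ odd, clique $\{v_1, \ldots, v_\ell\}$ and stable set $\{u_1, \ldots, u_\ell\}$. I first check that all $v_j \in C$ and all $u_i \in I$: at most one $v_j$ can lie in $I$ (stable set), and a single $v_j \in I$ would force an adjacent $u_i$ into $C$ (otherwise two $I$-vertices would be adjacent), which then would be adjacent inside $C$ to every other $v_{j'}$, contradicting the non-edges of $S_\ell$; similarly, any $u_i \in C$ would be adjacent inside $C$ to every $v_j$, contradicting the non-adjacencies $u_i \not\sim v_j$ for $j \notin \{i, i+1\}$. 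After relabeling so that $v_j = c_j$ in the Proposition~\ref{prop:1} construction, the defining rule $u_i \sim c_j$ in $H$ iff $u_i \in Q_j$ in $G$ gives $u_i \in Q_j$ precisely when $j \in \{i, i+1\}$ (mod $\ell$), producing an induced $C_{2\ell}$ on $\{Q_1, u_1, \ldots, Q_\ell, u_\ell\}$ in the incidence graph of $G$. Since $\ell$ is odd, this contradicts balancedness.

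For the algorithmic statement, I would enumerate $\mathcal{C}(G)$ with a standard output-sensitive method, halting with a NO answer as soon as more than $|V(G)|$ maximal cliques are produced (so enumeration remains polynomial), then test $\bigl|\bigcap \mathcal{C}(G)\bigr| \ge |\mathcal{C}(G)|$ and decide balancedness using a known polynomial-time algorithm. If all tests pass, Proposition~\ref{prop:1} reconstructs a root, which by the characterization is guaranteed to be odd-sun-free. I anticipate the main obstacle to be the case analysis in the necessity direction, specifically showing that the alternating vertices $x_i$ of the forbidden even cycle in the incidence graph must all lie in the independent set $I$ of $H$; the assumption $k \ge 3$ together with the strict non-memberships $x_i \notin N_H[v_j]$ for $j \notin \{i, i+1\}$ are exactly what makes this exclusion possible and hence ensures that a genuinely induced odd sun, rather than a looser configuration, is recovered.
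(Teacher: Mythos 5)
Your proof is correct and rests on the same key correspondence the paper uses, namely that induced odd suns in a split square root $H$ match induced cycles of length $2k$ ($k$ odd) in the vertex--maximal-clique incidence graph of $G=H^2$, mediated by Proposition~\ref{prop:2}'s identification of the maximal cliques of $G$ with the sets $N_H[v]$, $v\in C$. The only organizational difference is that you verify this sun--cycle correspondence explicitly in both directions and work directly with Proposition~\ref{prop:1}'s root, whereas the paper asserts the equivalence for the bipartite $C$--$I$ graph of $H$ without spelling it out and routes the sufficiency through the trunk $T(G)$ and the ``simple class'' machinery of Lemma~\ref{lem:simple}; the resulting square roots coincide.
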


We remark that, in contrast to strongly chordal split graphs and 3-sun-free split graphs, the recognition of odd-sun-free split graphs is highly involved, 
as it requires the decomposition of balanced matrices~\cite{CCR99}.

In the subsequent sections~\ref{sec:computing} and~\ref{sec:characterizations} we prove the above results.
We remark that the concepts we introduce in the proof of Theorem~\ref{thm:recognition} lead to fairly short proofs of the other results.

We close the paper in Section~\ref{sec:conclusion} with a short discussion of our results and related future work.
There we also provide questions for further research which may lead to a deeper understanding of the problem of finding split square roots.

\section{Proofs of Theorem~\ref{thm:recognition} and Corollary~\ref{cor:classes}}\label{sec:computing}

To prepare the proof of Theorem~\ref{thm:recognition} and Corollary~\ref{cor:classes}, we need the following concept, which modifies the construction of the square root in Proposition~\ref{prop:1}.

Let $G$ be a graph with $\big|\bigcap \mathcal{C}(G) \big| \ge |\mathcal{C}(G)|$.
Let $B$ be the bipartite graph with the bipartition $V(B)=\left(V(G)\setminus\bigcap\mathcal{C}(G)\right)\cup \mathcal C(G)$ and edge set $E(B)= \{vQ \mid v \in V(G)\setminus\bigcap\mathcal{C}(G), Q \in \mathcal{C}(G), v \in Q\}$. 
Note that in case $G$ is non-complete, $|\mathcal{C}(G)|=|\mathcal{C}\left(G\setminus\bigcap\mathcal{C}(G)\right)|$, hence $B$ is (isomorphic to) the bipartite vertex-to-maximal-clique incidence graph of $G\setminus\bigcap\mathcal{C}(G)$.

The \emph{trunk} of $G$, denoted $T(G)$, is the graph obtained from $B$ by including all possible edges between the vertices of $B$ representing the maximal cliques.
That is, $T(G)$ is the split graph with $V(T(G)) = C \cup I$, where $C=\mathcal{C}(G)$ is a clique and $I = V(G) \setminus \bigcap\mathcal{C}(G)$ is an independent set.
Since $\big|\bigcap \mathcal{C}(G) \big| \ge |\mathcal{C}(G)|$, we may at times identify $C$ with some subset of $\bigcap \mathcal C(G)$.
In particular, we may treat $T(G)$ as a subgraph of a certain square root of $G$. (Indeed, $T(G)$ is an induced subgraph of the split square root $H$ in Proposition~\ref{prop:1}.)

Proposition~\ref{prop:2} implies the following.

\begin{lemma}\label{lem:trunk}
\mbox{}
\begin{enumerate}[\em (a)]
	\item\label{st:a} For any graph $G$ with $\big|\bigcap \mathcal{C}(G) \big| \ge |\mathcal{C}(G)|$, $G \cong T(G)^2 \oplus K_r$, where $r:=\big| \bigcap \mathcal{C}(G) \big| - |\mathcal{C}(G)|$.
	\item\label{st:b} Let $G=H^2$ for some connected $3$-sun-free split graph $H=(C\cup I,E)$. Then $\big| \bigcap \mathcal{C}(G) \big| \ge |\mathcal{C}(G)|$ and $H$ contains $T(G)$ as an induced subgraph. Moreover, if $\big| \bigcap \mathcal{C}(G) \big| = |\mathcal{C}(G)|$, then $H\cong T(G)$.
\end{enumerate}
\end{lemma}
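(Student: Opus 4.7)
The plan is to handle parts (a) and (b) separately: part (a) by directly computing $T(G)^2$ from the trunk's definition, and part (b) by writing down an explicit induced embedding of $T(G)$ into $H$. To avoid a notational clash, I will write $C_H \cup I_H$ for the clique/independent partition of the split graph $H$ in part (b), reserving $C = \mathcal{C}(G)$ and $I = V(G) \setminus \bigcap \mathcal{C}(G)$ for the bipartition of $T(G)$.

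For part (a), first observe that every $x \in \bigcap \mathcal{C}(G)$ is universal in $G$, because any other vertex $y$ of $G$ lies in some maximal clique, which also contains $x$. Setting $s = |\bigcap \mathcal{C}(G)|$, this already gives $G \cong G[I] \oplus K_s$. Next I would compute $T(G)^2$ directly. All vertices of $C$ are mutually adjacent already in $T(G)$, and each vertex of $C$ is within distance $2$ in $T(G)$ of every $v \in I$: since $v$ lies in some maximal clique of $G$, it has a neighbor in $C$, from which every other element of $C$ is one step away. Two vertices $u, v \in I$ are at distance $2$ in $T(G)$ iff they share a common neighbor in $C$, i.e., a common maximal clique of $G$, which is equivalent to $uv \in E(G)$. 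Hence $T(G)^2 \cong G[I] \oplus K_q$ with $q = |\mathcal{C}(G)|$, and joining $K_{s-q} = K_r$ on both sides yields $G \cong T(G)^2 \oplus K_r$.

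For part (b), Proposition~\ref{prop:2} provides, for the enumeration $\mathcal{C}(G) = \{Q_1, \ldots, Q_q\}$, vertices $c_1, \ldots, c_q \in C_H$ with $Q_i = N_H[c_i]$. The key observation is that $C_H$ is a clique of $H$ containing every $c_i$, so $C_H \subseteq N_H[c_i] = Q_i$ for each $i$, and therefore $C_H \subseteq \bigcap \mathcal{C}(G)$; distinctness of the $Q_i$ forces distinctness of the $c_i$, so $|\bigcap \mathcal{C}(G)| \ge |C_H| \ge q$. Next define $\phi : V(T(G)) \to V(H)$ by $\phi(Q_i) = c_i$ and $\phi(v) = v$ for $v \in I$. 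The inclusion $C_H \subseteq \bigcap \mathcal{C}(G)$ ensures each $c_i \notin I$, making $\phi$ injective, and it simultaneously forces $I \subseteq V(H) \setminus C_H = I_H$. The three edge checks are then routine: pairs from $C$ map to distinct vertices of the clique $C_H$; for $v \in I$ and $Q_i \in C$, $vQ_i$ is an edge of $T(G)$ iff $v \in Q_i = N_H[c_i]$ iff $vc_i \in E(H)$; and two vertices of $I$ are non-adjacent in $T(G)$ by construction and non-adjacent in $H$ because both lie in the independent set $I_H$.

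For the final clause, if $|\bigcap \mathcal{C}(G)| = q$ then the chain $\{c_1, \ldots, c_q\} \subseteq C_H \subseteq \bigcap \mathcal{C}(G)$ collapses to a chain of equalities by cardinality, whence $C_H = \bigcap \mathcal{C}(G)$ and $I_H = V(H) \setminus C_H = V(G) \setminus \bigcap \mathcal{C}(G) = I$; so $\phi$ is a bijection and $T(G) \cong H$. I do not anticipate any substantial obstacle in this proof. The single point that has to be pinpointed and used carefully is the inclusion $C_H \subseteq \bigcap \mathcal{C}(G)$, which is what simultaneously yields the lower bound on $|\bigcap \mathcal{C}(G)|$ and prevents the images $c_i$ from colliding with the vertices of $I$ under $\phi$.
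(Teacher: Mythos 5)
Your proof is correct and follows essentially the same route as the paper's: part (a) by the same computation of $T(G)^2$ (universality of the clique side plus common-neighbor adjacency on $I$), and part (b) by exhibiting the vertices $c_i$ with $Q_i=N_H[c_i]$ from Proposition~\ref{prop:2} and mapping $T(G)$ onto $H[\{c_1,\dots,c_q\}\cup I]$. The only cosmetic difference is that you justify $I\subseteq I_H$ via the inclusion $C_H\subseteq\bigcap\mathcal{C}(G)$ rather than via universality of $C_H$ in $G=H^2$, which amounts to the same thing.
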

\begin{proof}
Statement (\ref{st:a}) follows immediately from the definition of $T(G)$ and the following facts. (1) Two vertices in $V(G)\setminus \bigcap \mathcal{C}(G)$ are adjacent in $G$ if and only if they belong to a common maximal clique, i.e., they have a common neighbor in $T(G)$, and (2) $\bigcap \mathcal{C}(G)$ is exactly the set of all universal vertices of $G$ and $\mathcal{C}(G)$ is exactly the set of all universal vertices of $T(G)^2$. 

The first part of statement (\ref{st:b}) follows directly from Proposition~\ref{prop:2}. By Proposition~\ref{prop:2} again, 
for every $Q \in \mathcal C(G)$ there is a unique vertex $v_Q\in C$ with $Q=N_H[v_Q]$ and such that $v_Q\not= v_{Q'}$ for $Q\not=Q'$.
Thus, letting $C'=\{v_Q\in C\mid Q\in \mathcal{C}(G)\}$, we have
\[
C'\subseteq C\, \mbox{ and }\, |C'|=|\mathcal{C}(G)|.
\]
Next, letting $I' := V(G) \setminus \bigcap \mathcal{C}(G)$, we have
\[
I' \subseteq I,
\]
because no vertex of $I'$ is universal in $G$, while all vertices in $C$ are universal in $H^2=G$.

Now, by definition of $C'$, a vertex $u\in I'$ is adjacent, in $H$, to a vertex $v_Q\in C'$ if and only if $u\in Q$. This implies that 
\[
T(G)\cong H[C' \cup I'].
\] 
Moreover, if $\big| \bigcap \mathcal{C}(G) \big| = |\mathcal{C}(G)|$, then clearly, $H[C' \cup I'] = H$.

\end{proof}

Given a simple hereditary class of $3$-sun-free split graphs $\cal G$, Lemma~\ref{lem:trunk} enables us to characterize graphs having a square root in $\cal G$ as follows.

\begin{lemma}\label{lem:simple}
Let $\mathcal G$ be a simple hereditary class of $3$-sun-free split graphs, and let $G$ be a connected graph. 
Then $G$ has a square root in $\mathcal G$ if and only if 
\begin{equation}\label{eqn:root-or-not}
\big| \bigcap \mathcal{C}(G) \big| \ge |\mathcal{C}(G)| \mbox{ and } T(G) \in \mathcal G.
\end{equation}
\end{lemma}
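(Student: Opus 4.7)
The plan is to prove the two directions of the biconditional separately, using Lemma~\ref{lem:trunk} as the main tool.

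For the forward direction, I would assume that $G=H^2$ for some $H\in\mathcal G$. Since $\mathcal G$ consists of $3$-sun-free split graphs and $G$ is connected, Lemma~\ref{lem:trunk}(\ref{st:b}) applies directly: it yields $|\bigcap\mathcal C(G)|\ge|\mathcal C(G)|$ and tells us that $T(G)$ is an induced subgraph of $H$. Because $\mathcal G$ is hereditary, we immediately conclude $T(G)\in\mathcal G$, which gives the required conditions.

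For the backward direction, I would assume (\ref{eqn:root-or-not}) and exhibit a square root of $G$ in $\mathcal G$. Setting $r:=|\bigcap\mathcal C(G)|-|\mathcal C(G)|\ge 0$, Lemma~\ref{lem:trunk}(\ref{st:a}) yields $G\cong T(G)^2\oplus K_r$. Using the function $f:\mathcal G\to\mathcal G$ from the definition of a simple class, I would define $H:=f^r(T(G))$, i.e.\ the $r$-fold iterate of $f$ applied to $T(G)$. Since $f$ maps $\mathcal G$ to itself and $T(G)\in\mathcal G$ by assumption, a straightforward induction on $r$ gives $H\in\mathcal G$. Another induction, using the defining property $f(H')^2\cong H'^2\oplus K_1$ of the function $f$, shows $H^2\cong T(G)^2\oplus K_r\cong G$, as desired.

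I anticipate no serious obstacle; the main point is to make sure the iterated application of $f$ is legitimate, which follows from the fact that $f$ is a map from $\mathcal G$ into itself and that $\mathcal G$ is hereditary (so in particular contains $K_1$ whenever it is non-empty, handling the degenerate case $G\cong K_n$). The remaining bookkeeping about whether we identify $T(G)$ with an induced subgraph of $H$ versus just with a square root is already taken care of in Lemma~\ref{lem:trunk} and the definition of the trunk, so the argument essentially reduces to applying those facts.
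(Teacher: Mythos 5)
Your proposal is correct and follows essentially the same route as the paper's own proof: the forward direction via Lemma~\ref{lem:trunk}(\ref{st:b}) plus heredity of $\mathcal G$, and the backward direction by applying the $r$-fold iterate $f^r$ to $T(G)$ and invoking Lemma~\ref{lem:trunk}(\ref{st:a}) to identify $f^r(T(G))^2 \cong T(G)^2 \oplus K_r \cong G$. The only difference is that you spell out the inductions and the degenerate case explicitly, which the paper leaves implicit.
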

\begin{proof}
As Lemma~\ref{lem:trunk} (\ref{st:b}) states, if $G$ admits a 3-sun-free square root $H$, then $\big| \bigcap \mathcal{C}(G) \big|$ $\ge |\mathcal{C}(G)|$ and $H$ contains $T(G)$ as an induced subgraph. If $H\in \cal G$, then $T(G)\in \cal G$ because $\cal G$ is hereditary. 
Thus, if (\ref{eqn:root-or-not}) fails to hold, $G$ does not have a square root in $\mathcal G$.

To see the converse, assume that (\ref{eqn:root-or-not}) holds, and let $r:=\big| \bigcap \mathcal{C}(G) \big| - |\mathcal{C}(G)|$. Because $\cal G$ is simple, we have a function $f: \mathcal G \to \mathcal G$ such that $f(H)^2 \cong H^2 \oplus K_1$, for all $H \in \mathcal G$. Let $f^r=f\circ f \circ \cdots \circ f$ be the $r$-fold composition of $f$; $f^0$ stands for the identity. 
Since $T(G)\in \mathcal G$ and $\mathcal G$ is simple, $f^r(T(G)) \in \mathcal G$. 
Moreover, $f^r(T(G))^2 \cong T(G)^2 \oplus K_r$. Hence, by Lemma~\ref{lem:trunk} (\ref{st:a}), 
$G \cong f^r(T(G))^2$.  
Thus, $G$ has a square root in $\mathcal G$, namely $f^r(T(G))$.
\end{proof}

We now are able to prove our main result.

\begin{proof}[Proof of Theorem~\ref{thm:recognition}.]
Given an arbitrary graph $G$, we may check, in polynomial time, whether the condition~(\ref{eqn:root-or-not}) holds as follows.

First we check if $G$ has a $3$-sun-free split square root at all. This step can be done in time $\mathcal{O}(|V(G)|^2|E(G)|)$ by Theorem~\ref{thm:recogsquareofs3freesplit}. 
If this is not the case, we can correctly output that, in particular, $G$ has no square root in $\mathcal G$.

Otherwise, we know by Proposition~\ref{prop:2} that $\big| \bigcap \mathcal{C}(G) \big| \ge |\mathcal{C}(G)|$. So, it remains to check whether $T(G) \in \mathcal G$.

By Proposition~\ref{prop:2} again, $G$ has at most $|V(G)|$ maximal cliques. 
Thus, using the algorithm in~\cite{TIAS77}, all maximal cliques in $G$ can be listed in time $\mathcal{O}(|V(G)|^2|E(G)|)$, and therefore, $T(G)$ can be constructed in the same time complexity. 
Now, if $\mathcal G$ can be recognized in polynomial time, it can be decided whether $T(G) \in \mathcal G$ in polynomial time. By Lemma~\ref{lem:simple}, this completes the proof.
\end{proof}


We can now state the proof of Corollary~\ref{cor:classes}.

\begin{proof}[Proof of Corollary~\ref{cor:classes}.]
Let $\mathcal G$ be one of the listed classes, and let $H=(C \cup I,E(H)) \in \mathcal G$.
We define $f(H)$ to be the graph obtained from $H$ by adding a new vertex and making it fully adjacent to $C$. It follows from the definition of $\cal G$ (see also Appendix), that $f(H)\in\cal G$. Clearly, in $f(H)^2$, the new vertex is universal, hence $f(H)^2\cong H^2\oplus K_1$. Thus, $\mathcal G$ is simple.

Note that the recognition problem for all mentioned graph classes is solvable in polynomial time, see Appendix. 
Consequently, we can check whether $T(G) \in \mathcal G$ in polynomial time. (Construction of $T(G)$ as indicated in proof of Theorem~\ref{thm:recognition}.) 
Since $f$ is clearly computable in polynomial time in $|V(H)|+|E(H)|$, a square root of $G$ in $\mathcal G$, namely $f^r(T(G))$ (cf. proof of Lemma~\ref{lem:simple}), can indeed be computed in polynomial time.
This completes the proof.
\end{proof}

\section{Proofs of Theorems~\ref{thm:simple-and-finite} and~\ref{thm:odd-sun-free-characterization}}\label{sec:characterizations}

\begin{proof}[Proof of Theorem~\ref{thm:simple-and-finite}]
Let $G$ be a graph with the following properties.
\begin{enumerate}[(a)]
	\item\label{item:forbidden1} $|\bigcap \mathcal C(G)| \ge |\mathcal C(G)|$;
	\item\label{item:forbidden2} $G$ does not admit a square root in $\mathcal G$;
	\item\label{item:forbidden3} every proper induced subgraph $G'$ of $G$ with $|\bigcap \mathcal C(G')| \ge |\mathcal C(G')|$ admits a square root in $\mathcal G$.
\end{enumerate}
We first prove that the order of $G$ is bounded by a constant $N=N(\mathcal F)$ depending only on $\mathcal F$,
namely by 
\[
N := \max\left\{2|V(F)|^2 + 2^{2|V(F)|^2} : F \in \mathcal F\right\}.
\]
After that, we show how the existence of a finite set $\mathcal F'$ as in the statement of the theorem follows.

In order to control the structure of the split square roots of $G$, we first show that we may assume $G$ to have a $3$-sun-free split square root.
Suppose the contrary.
By Theorem~\ref{thm:squareofs3freesplit}, $G$ contains one of $G_1,\ldots,G_4$ from Fig.~\ref{fig:G1234} as an induced subgraph, say $G_i$.
Hence, $|\bigcap \mathcal C(G)| \ge |\mathcal C(G)| \ge |\mathcal C(G_i)|$, and so $G$ contains $G' := G_i \oplus K_r$ as an induced subgraph, where $r:=|\mathcal C(G_i)|\le 8$.
But all conditions~(\ref{item:forbidden1})--(\ref{item:forbidden3}) apply to $G'$, too, and thus $G \cong G'$.
This means $|V(G)| = |V(G_i)| + |\mathcal C(G_i)| \le 14$, as desired.

So we know that $G$ has a $3$-sun-free split square root.
By Lemma~\ref{lem:trunk}, any such square root contains $T(G) = (C \cup I,E)$ as an induced subgraph. 
Recall that $C=\cal C(G)$ and $I=V(G)\setminus\bigcap \mathcal{C}(G)$.

By Lemma~\ref{lem:simple}, $T(G) \notin \mathcal G$.
Thus, there is an induced subgraph $F$ of $T(G)$ that is isomorphic to some member of $\mathcal F$.
Since $T(G)$ is 3-sun-free and split, $F$ is a 3-sun-free split graph, too.
Our aim is now to identify a small subgraph $G'$ of $G$ whose any $S_3$-free split square root contains $F$ as an induced subgraph.

For this, let $C(F) = V(F) \cap C$ and $I(F) = V(F) \cap I$.
For each $v \in C(F)$ let $S_v = N_{T(G)}(v) \cap I$.
Moreover, for each $x \in I(F) \setminus S_v$ we may choose a vertex $v_x \in S_v$ such that $xv_x \notin E(G)$.
This is possible since $S_v$ forms, together with the universal vertices of $G$, a maximal clique in $G$. 
Let $U_v$ be the collection of all these vertices $v_x$, and let $W := I(F) \cup \bigcup_{v \in C(F)} U_v$.
Note that $|U_v|\le |I(F) \setminus S_v|$, hence
\begin{equation}\label{eqn:size-of-W}
|W| \le |I(F)|+|C(F)| \cdot |I(F)| \le |V(F)|^2.
\end{equation}
Finally, we choose a set $W' \subseteq I$ such that, for each vertex $x \in W$, there is a vertex $x' \in W'$ with $xx' \notin E(G)$.
This is possible since $W \subseteq I$ implies that no vertex of $W$ is universal in $G$.
Note that we do not require $W'$ to be disjoint to $W$.
Clearly we can choose $W'$ such that $|W'| \le |W|$. 

Let $C' \subseteq C$ such that $|C'| = |\mathcal C(G[W \cup W'])|$.
Such a set exists since $|C| = |\mathcal C(G)| \ge |\mathcal C(G[W \cup W'])|$.
We now consider the graph $G' := G[W \cup W' \cup C']$.
Since $|W'| \le |W|$, (\ref{eqn:size-of-W}) implies
\begin{align}
\nonumber |V(G')| &\le |W|+|W'|+|\mathcal C(G[W \cup W'])|\\
\nonumber  				&\le 2|V(F)|^2 + 2^{|W|+|W'|}\\
									&\le 2|V(F)|^2 + 2^{2|V(F)|^2} \le N. \label{eqn:size-of-G'}
\end{align}

Note that the set of universal vertices of $G'$ equals $C'$.
Since $\bigcap \mathcal C(G') = C'$, we have $|\mathcal C(G')| = |\bigcap \mathcal C(G')|$.
By Theorem~\ref{thm:squareofs3freesplit}, $G'$ has a 3-sun-free split square root $H =: (C(H) \cup I(H) , E(H))$.
As $|\bigcap \mathcal C(G')| = |\mathcal C(G')|$, $H \cong T(G')$ by Lemma~\ref{lem:trunk}.

We now show that $H$ contains $F$ as an induced subgraph.
Note that, since no vertex of $W \cup W'$ is universal in $G'$, $I(H)=W \cup W'$, and so $C(H)=C'$.
For each $v \in C(F)$, the set $(S_v \cap I(F)) \cup U_v$ is a clique in $G'$.
Thus, there is some $Q_v \in \mathcal C(G')$ with $(S_v \cap I(F)) \cup U_v \subseteq Q_v$.
By the definition of $U_v$, for each $x \in I(F) \setminus S_v$ there is some $v_x \in U_v$ with $xv_x \notin E(G)$ and thus $xv_x \notin E(G')$.
Therefore $(I(F) \setminus S_v) \cap Q_v = \emptyset$.
This in turn means $Q_v \cap I(F) = S_v \cap I(F)$.
Moreover, for each $v \in C(F)$ there is some $v' \in C'$ such that $N_H[v']=Q_v$.
Hence, $H[I(F) \cup \{v' : v \in C(F)\}] \cong F$, as claimed.

This means $H \notin \mathcal G$, and thus $T(G') \notin \mathcal G$. Therefore, by Lemma~\ref{lem:simple}, $G'$ does not admit a square root in $\cal G$. 
By the choice of $G$, $G = G'$.
Since by~(\ref{eqn:size-of-G'}) the order of $G$ is bounded by $N$, the first part of the proof is complete.
\medskip

So we know that the order of every graph that satisfies the conditions~(\ref{item:forbidden1})--(\ref{item:forbidden3}) is bounded by some constant $N$ depending only on $\mathcal F$.
We pick $\mathcal F'$ as the finite set of all such graphs.
In order to prove that $\mathcal F'$ satisfies the statement of the theorem, let $G$ be any connected graph.

First we assume that $G$ satisfies $|\bigcap \mathcal C(G)| \ge |\mathcal C(G)|$ and is, furthermore, $\mathcal F'$-free.
In particular, (\ref{item:forbidden1}) holds for $G$.
Suppose that $G$ does not have a square root in $\mathcal G$, that is, (\ref{item:forbidden2}) holds for $G$.
Clearly we may assume that $G$ is chosen vertex-minimal with respect to these properties.
That is, condition~(\ref{item:forbidden3}) holds for $G$.
But then the order of $G$ is at most $N$, and thus $G \in \mathcal F'$, a contradiction.

Now we assume that $G$ has a square root in $\mathcal G$.
By Lemma~\ref{lem:simple}, $|\bigcap \mathcal C(G)| \ge |\mathcal C(G)|$ and $T(G) \in \mathcal G$.
In particular, it remains to prove that $G$ is $\mathcal F'$-free.
Suppose that $G$ contains some graph $F \in \mathcal F'$ as an induced subgraph.
Then $T(F)$ is (isomorphic to) an induced subgraph of $T(G)$, due to the definition of the trunk.
Since $T(G) \in \mathcal G$ and $\mathcal G$ is a hereditary class, $T(F) \in \mathcal G$.
However,~(\ref{item:forbidden2}) implies that $F$ does not have a square root in $\mathcal G$.
Thus, Lemma~\ref{lem:simple} and (\ref{item:forbidden1}) imply $T(F) \notin \mathcal G$, a contradiction.
\end{proof}

\begin{proof}[Proof of Theorem~\ref{thm:odd-sun-free-characterization}]
Let $G$ be a graph that has a connected odd-sun-free split square root $H$.
Let $C \cup I$ be a split partition of $H$.

Since $H$ is 3-sun-free, we know that $\big|\bigcap \mathcal{C}(G)\big|\ge |\mathcal{C}(G)|$, by Proposition~\ref{prop:2}.
It remains to show that $G$ is balanced.
For this, let $B$ be the bipartite graph on the vertex set $C \cup I$ containing exactly those edges of $H$ that join a vertex of $C$ to a vertex of $I$.
As $H$ is odd-sun-free, $B$ does not contain induced cycles of length $2k$, for any odd $k$.

By Proposition~\ref{prop:2}, for every $Q \in \C(G)$ there is some $x_Q \in C$ with $N_H(x_Q) \cap I = Q \cap I$.
Thus consider the hypergraph $\mathcal H$ on the vertex set $I$ with hyperedges $\{N_H(x_Q) \cap I : Q \in \C(G)\}$.
Note that the vertex-to-hyperedge incidence graph of $\mathcal H$ is an induced subgraph of $B$.
Hence, $\mathcal H$ is balanced.
But $\mathcal H$ is exactly the maximal clique hypergraph of $G[I]$, since all vertices in $V(G) \setminus I = C$ are universal vertices in $G$.
Consequently, $G[I]$ is balanced.
As $G$ is obtained from $G[I]$ by simply attaching the vertices of $C$ as universal vertices, $G$ is balanced, too.

Now let a balanced graph $G$ with $\big|\bigcap \mathcal{C}(G)\big|\ge |\mathcal{C}(G)|$ be given.
Let $B$ be the vertex-to-maximal-clique incidence graph of $G$.
As $G$ is balanced, $B$ does not contain induced cycles of length $2k$, for any odd $k$.
Thus, $T(G)$ is odd-sun-free.
Recall from the proof of Corollary~\ref{cor:classes} that the class of odd-sun-free split graphs is simple.
By Lemma~\ref{lem:simple}, $G$ has an odd-sun-free square split root $H$ as desired.

Since balanced graphs can be recognized in polynomial time~\cite{BDLS06} and have a linear number of maximal cliques, we may thus decide wether a given graph admits an odd-sun-free split square root in polynomial time.
\end{proof}

\section{Conclusion}\label{sec:conclusion}

In this paper we discussed the complexity of recognizing squares of hereditary subclasses $\mathcal G$ of 3-sun-free split graphs.
Our Theorem~\ref{thm:recognition} shows that if such a class $\mathcal G$ is simple, squares of graphs in $\mathcal G$ can be recognized in polynomial time if $\mathcal G$ can be recognized in polynomial time.
In Corollary~\ref{cor:classes} we gave several examples for such simple classes, including, according to our knowledge, all previously known subclasses of split graphs for which the square graph problem is known to be solvable in polynomial time.

Theorem~\ref{thm:simple-and-finite} shows that, for any simple graph class $\mathcal G$ of 3-sun-free split graphs defined by finitely many forbidden induced subgraphs, there exists a finite list of forbidden induced subgraphs for the class of squares of graphs of $\mathcal G$.
In addition, we gave a characterization of squares of odd-sun-free split graphs, in Theorem~\ref{thm:odd-sun-free-characterization}.
The proof shows how our tools work even in the case of graph classes defined by infinitely many forbidden induced subgraphs.

In view of Theorem~\ref{thm:simple-and-finite}, we have the following open problem.

\begin{question*}\label{que:finite}
Let $\mathcal G$ be a class of $3$-sun-free split graphs, defined by a finite set $\mathcal F$ of forbidden induced subgraphs.
Is there a finite set $\mathcal F'$ of graphs such that the following holds:
A given connected graph $G$ has a square root in $\mathcal G$ if and only if $|\bigcap \mathcal C(G)| \ge |\mathcal C(G)|$ and $G$ is $\mathcal F'$-free?
\end{question*}
Indeed, Theorem~\ref{thm:simple-and-finite} shows that this question is true in the case that $\mathcal G$ is simple.
If it is true in general, we could derive the following, which we consider as a very interesting result:
for every class $\mathcal G$ of 3-sun-free split graphs defined by finitely many forbidden induced subgraphs there is a polytime algorithm to decide whether a given graph has a square root in $\mathcal G$.
This is due to the well-known fact that graph classes defined by finitely many forbidden induced subgraphs can be recognized in polynomial time.

Another natural question is whether one can give a general framework to find characterizations such as Theorem~\ref{thm:odd-sun-free-characterization} for all simple classes.
Such a framework would perfectly complement Theorem~\ref{thm:recognition} which is, in a sense, such a general framework but from an algorithmic perspective.
In a first step, one could try to extend Theorem~\ref{thm:simple-and-finite} from the existence statement to a construction manual for obtaining the promised list of forbidden induced subgraphs.

\bibliographystyle{amsplain}
\bibliography{graph-powers}


\newpage

\begin{appendix}


\section*{Appendix: Graph Classes} For the sake of completeness, definitions and relevant information on graph classes listed in Corollary~\ref{cor:classes} will be given here. We derive that these classes are simple with respect to $f$ defined in the proof of Corollary~\ref{cor:classes}: for $H=(C\cup I, E)\in \cal G$, $f(H)$ is obtained from $H$ by adding a new vertex and making it be adjacent to all vertices in the clique $C$ of $H$.

It is obvious that $f(H)^2=H^2\oplus K_1$. It remains to show that $f(H)\in\cal G$ for all classes $\cal G$ listed in Corollary~\ref{cor:classes}, and this should be clear in case $\cal G$ is the class of $3$-sun-free split graphs, ($3$-sun, net)-free split graphs, strongly chordal split graphs, odd-sun-free split graphs, respectively.

Recall that the graph \emph{net} is the complement of the $3$-sun $S_3$. We write $S_\ell^-$ for the graph obtained from the $\ell$-sun $S_\ell$ by deleting a degree-$2$ vertex, and co-$S_\ell^-$ for the complement of $S_\ell^-$.

\medskip\noindent
\textbf{Interval split graphs.}\,
A graph is an \emph{interval split graph} if it is an interval graph and a split graph at the same time. Since interval graphs as well as split graphs can be recognized in linear time (see, for instance, \cite{BLS99,isgci,Spinrad}), recognizing interval split graphs can be done in linear time. 
Interval split graphs have been characterized by Foldes and Hammer in~\cite{FolHam77} as follows: 

\begin{quote}
$G$ is an interval split graph if and only if $G$ is a $(S_3, \mbox{net}, S_4^-)$-free split graph.
\end{quote}

Form this characterization, it is obvious that $f(H)$ is an interval split graph whenever $H$ is an interval split graph.

\medskip\noindent
\textbf{Comparability split graphs.}\,
A graph is a \emph{comparability split graph} if it is a comparability graph and a split graph at the same time. Since comparability graphs can be recognized in time proportional to matrix multiplication (see, for instance, \cite{BLS99,isgci,Spinrad}), recognizing comparability split graphs can be done in polynomial time. 
Comparability split graphs have been characterized by F\"oldes and Hammer in~\cite{FolHam77a} as follows: 

\begin{quote}
$G$ is a comparability split graph if and only if $G$ is a $(S_3, \mbox{net}, \mbox{co-}S_4^-)$-free split graph.
\end{quote}

Form this characterization, it is obvious that $f(H)$ is a comparability split graph whenever $H$ is a comparability split graph.

\medskip\noindent
\textbf{Permutation split graphs.}\, 
A graph is a \emph{permutation split graph} if it is a permutation graph and a split graph at the same time. Since permutation graphs can be recognized in linear time (see, for instance, \cite{BLS99,isgci,Spinrad}), recognizing permutation split graphs can be done in linear time. 
Since permutation graphs are exactly those graphs that are both comparability graphs and co-comparability graphs, we have: 

\begin{quote}
$G$ is a permutation split graph if and only if $G$ is a $(S_3, \mbox{net}, S_4^-, \mbox{co-}S_4^-)$-free split graph.
\end{quote}

Form this characterization, it is obvious that $f(H)$ is a permutation split graph whenever $H$ is a permutation split graph.

\medskip\noindent
\textbf{Probe threshold split graphs.}\, 
Let $\mathcal C$ be a graph class. A graph $G$ is called a probe $\mathcal{C}$ graph if there is an independent set $N$ of \lq non-probe\rq\ vertices in $G$ such that we can add some new edges between certain non-probe vertices and obtain a new graph $G'$ in the class $\mathcal{C}$. A graph is a \emph{probe threshold split graph} if it is a probe threshold graph and a split graph at the same time. Probe threshold graphs have been characterized in~\cite{BLR09} and can be recognized in linear time; see also~\cite{isgci}. Therefore, probe threshold split graphs can be recognized in linear time. Moreover, it turns out that probe threshold split graphs can be characterized as $(S_3,\mbox{net})$-free split graphs without the four additional forbidden induced subgraphs seen in Fig.~\ref{fig:G4789}, as was shown in~\cite{BLR09}.

From this characterization, it is obvious that that $f(H)$ is a probe threshold split graph whenever $H$ is a probe threshold split graph. 

\begin{figure}[ht] 
\begin{center}
\psset{unit=0.6}
\begin{pspicture}(0,0)(17,4)

\cnode(0,2){0.1cm}{a_1}
\cnode(1,3){0.1cm}{a_2}
\cnode(1,1){0.1cm}{a_3}
\cnode(2.5,3){0.1cm}{a_4}
\cnode(2.5,1){0.1cm}{a_5}
\cnode(3.5,2){0.1cm}{a_6}

\ncarc[arcangle=0]{-}{a_1}{a_2}
\ncarc[arcangle=0]{-}{a_1}{a_3}
\ncarc[arcangle=0]{-}{a_2}{a_3}

\ncarc[arcangle=0]{-}{a_2}{a_4}
\ncarc[arcangle=0]{-}{a_2}{a_5}
\ncarc[arcangle=0]{-}{a_3}{a_4}
\ncarc[arcangle=0]{-}{a_3}{a_5}
\ncarc[arcangle=0]{-}{a_4}{a_5}

\ncarc[arcangle=0]{-}{a_4}{a_6}
\ncarc[arcangle=0]{-}{a_5}{a_6}

\cnode(6.5,3){0.1cm}{b_1}
\cnode(5,2){0.1cm}{b_2}
\cnode(5.75,1){0.1cm}{b_3}
\cnode(7.25,1){0.1cm}{b_4}
\cnode(8,2){0.1cm}{b_5}
\cnode(5.25,0){0.1cm}{b_6}
\cnode(7.75,0){0.1cm}{b_7}

\ncarc[arcangle=0]{-}{b_1}{b_2}
\ncarc[arcangle=0]{-}{b_1}{b_3}
\ncarc[arcangle=0]{-}{b_1}{b_4}
\ncarc[arcangle=0]{-}{b_1}{b_5}

\ncarc[arcangle=0]{-}{b_2}{b_3}
\ncarc[arcangle=0]{-}{b_3}{b_4}
\ncarc[arcangle=0]{-}{b_4}{b_5}

\ncarc[arcangle=0]{-}{b_3}{b_6}
\ncarc[arcangle=0]{-}{b_4}{b_7}

\cnode(11,3){0.1cm}{c_1}
\cnode(9.5,2){0.1cm}{c_2}
\cnode(10.5,2){0.1cm}{c_3}
\cnode(11.5,2){0.1cm}{c_4}
\cnode(12.5,2){0.1cm}{c_5}
\cnode(11,1){0.1cm}{c_6}
\cnode(11,0){0.1cm}{c_7}

\ncarc[arcangle=0]{-}{c_1}{c_2}
\ncarc[arcangle=0]{-}{c_1}{c_3}
\ncarc[arcangle=0]{-}{c_1}{c_4}
\ncarc[arcangle=0]{-}{c_1}{c_5}
\ncarc[arcangle=0]{-}{c_1}{c_6}

\ncarc[arcangle=0]{-}{c_2}{c_3}
\ncarc[arcangle=0]{-}{c_3}{c_4}
\ncarc[arcangle=0]{-}{c_4}{c_5}

\ncarc[arcangle=0]{-}{c_2}{c_6}
\ncarc[arcangle=0]{-}{c_3}{c_6}
\ncarc[arcangle=0]{-}{c_4}{c_6}
\ncarc[arcangle=0]{-}{c_6}{c_7}

\cnode(15.5,3){0.1cm}{d_1}
\cnode(14,2){0.1cm}{d_2}
\cnode(15,2){0.1cm}{d_3}
\cnode(16,2){0.1cm}{d_4}
\cnode(17,2){0.1cm}{d_5}
\cnode(15.5,1){0.1cm}{d_6}
\cnode(15.5,0){0.1cm}{d_7}
\cnode(15.5,4){0.1cm}{d_8}

\ncarc[arcangle=0]{-}{d_1}{d_2}
\ncarc[arcangle=0]{-}{d_1}{d_3}
\ncarc[arcangle=0]{-}{d_1}{d_4}
\ncarc[arcangle=0]{-}{d_1}{d_5}
\ncarc[arcangle=0]{-}{d_1}{d_6}
\ncarc[arcangle=0]{-}{d_1}{d_8}

\ncarc[arcangle=0]{-}{d_2}{d_3}
\ncarc[arcangle=0]{-}{d_3}{d_4}
\ncarc[arcangle=0]{-}{d_4}{d_5}

\ncarc[arcangle=0]{-}{d_2}{d_6}
\ncarc[arcangle=0]{-}{d_3}{d_6}
\ncarc[arcangle=0]{-}{d_4}{d_6}
\ncarc[arcangle=0]{-}{d_6}{d_7}

\ncarc[arcangle=0]{-}{d_5}{d_6}

\end{pspicture}
\end{center}
\caption{The four additional forbidden subgraphs for probe threshold split graphs.}
\label{fig:G4789}
\end{figure}
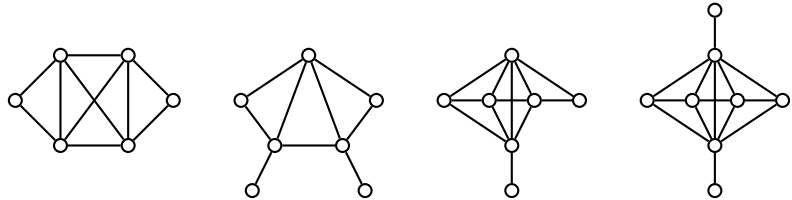

\end{appendix}

\end{document}